\newtheorem{heuristic}{Heuristic}
\newtheorem{definition}{Definition}[section]
\newtheorem{lemma}{Lemma}[section]
\newtheorem{theorem}{Theorem}[section]
\newtheorem{corollary}{Corollary}[section] 
\definecolor{mygray11}{gray}{.99}
\definecolor{mygray10}{gray}{.89}
\definecolor{mygray9}{gray}{.79}
\definecolor{mygray8}{gray}{.69}
\definecolor{mygray7}{gray}{.59}
\definecolor{mygray6}{gray}{.49}
\definecolor{mygray5}{gray}{.39}
\definecolor{mygray4}{gray}{.29}
\definecolor{mygray3}{gray}{.19}
\definecolor{mygray2}{gray}{.1}
\definecolor{mygray}{gray}{.01}
\definecolor{mygray7test}{gray}{.973}
\definecolor{mygray6test}{gray}{.906}
\definecolor{mygray5test}{gray}{.7532}
\definecolor{mygray4test}{gray}{.5998}
\definecolor{mygray3test}{gray}{.42}
\definecolor{mygray2test}{gray}{.25}
\definecolor{mygraytest}{gray}{.01}
\begin{document}

%
\title{\huge  An Improved Solution for Restricted and Uncertain TRQ   }
%
%
%
%

\author{Jack~Wang 
\IEEEcompsocitemizethanks{\IEEEcompsocthanksitem Contact information: cszjwang@gmail.com. 
\protect\\

}
\thanks{}}

\IEEEcompsoctitleabstractindextext{%
\begin{abstract}
 CSPTRQ is an interesting problem and its has attracted much attention.  The CSPTRQ is a variant of the traditional  PTRQ.  As objects moving in a constrained-space are common, clearly, it can also find many applications. At the first sight, our problem  can be easily tackled by extending existing methods used to answer the PTRQ. Unfortunately, 
 those classical techniques are not well suitable for our problem, due to a set of new challenges.  
 We develop  targeted solutions and  demonstrate the efficiency and effectiveness of the proposed methods through extensive experiments.    
 \end{abstract}

}

\maketitle

\IEEEdisplaynotcompsoctitleabstractindextext

%
\IEEEpeerreviewmaketitle





 \section{Introduction}\label{sec:1}
 \vspace{-1ex}
 
 The range query  as one of fundamental operations in moving object search systems   has attracted a lot of attention  in the past decades  \cite{RuiZhang:Optimized,bugragedik:processing,KunLungWu:Incremental,HaojunWang:Processing,YufeiTao:The,HodaMokhtar:On,SunilPrabhakar:Query,MohamedFMokbel:SINA,MohamedFMokbel:SOLE,HaiboHu:AGeneric,MuhammadAamirCheema:multi,BinCui:IMPACT,DariusSidlauskas:Parallel}.   
 A database server usually only stores the discrete location information due to various reasons such as  the limited network bandwidth and  battery power of the mobile devices \cite{dieterpfoser:capturing,reynoldcheng:querying}.
 This fact implies that the current specific position of a  moving object $o$ is uncertain before  obtaining  the next (sampled) location information, which can lead to the incorrect answer if we simply take the recorded location  (stored in the database) as the current position of $o$. 
 In order to tackle the aforementioned problem, the idea of incorporating \textit{uncertainty} into the moving object data has been proposed \cite{OuriWolfson:Updating}.  
 A widely-used uncertainty model  is to use a closed region (known as  \textit{uncertainty region}) together with a probability
 density function (PDF), which is used to denote the object's \textit{location distribution}\cite{OuriWolfson:Updating,reynoldcheng:querying}.  
 
 From then on, probabilistic range query (PRQ) as a derivative  version of the traditional range query was naturally presented, and  many outstanding works   addressed this problem (see e.g., \cite{brucesechung:processing,aprasadsistal:Modeling,dieterpfoser:capturing,gocetrajcevski:probabilistic,jinchuanchen:efficient,aprasadsistal:querying,gocetrajcevski:managing,reynoldcheng:querying,meihuizhang:effectively}). 
 In existing results, one of important branches  is to address the PRQ over  objects   moving  freely (without predefined routes)  in  two-dimensional (2D) space (see e.g., \cite{jinchuanchen:efficient,meihuizhang:effectively,reynoldcheng:querying}). Our work generally falls in the aforementioned branch. 

 \noindent \textbf{{Motivations}.}
 A common fact is that  users usually are  interested in the objects  being located in the query range $R$ with higher probabilities.  Several classical papers  (see e.g., \cite{jinchuanchen:efficient,yufeitao:indexing,yingzhang:Efficient}) already considered this fact and studied the probabilistic threshold range query (PTRQ).     Existing results are mainly developed for the case of non-constrained 2D space (i.e., no obstacles exist). To our knowledge, the constrained-space probabilistic threshold range query (CSPTRQ) has not been studied yet.   Moreover, we realize that more and more intelligent terminals have been configured with touch screens  by which one can input the query requirement using the finger or  interactive pen \cite{par:highprecision,ArneLDuwaer:dataprocessing}. An obvious fact is that a more generic  shaped query range should be better for the user experience, and can also improve the flexibility of a system itself. Existing works (see e.g., \cite{aprasadsistal:Modeling,aprasadsistal:querying}) already adopted the general polygon as the query range. Those results are mainly developed  from the theoretical perspective. 
 Specifically, this work studies the CSPTRQ supporting a  {generic shaped} query range, for  moving  objects. 
 
 The CSPTRQ can be used in a lot of applications, as  objects moving in a constrained 2D space are common in the real world.  For example,   mobile robots are  already used to rescue survivors after a disaster such as an earthquake \cite{Robin:DisasterRobotics}. The location information of  robots is   collected and  stored on the database server.  A typical application for dispatching scattered robots to a specific location is   retrieving the identities of the robots that are currently located in a given region with no less than a predefined (e.g., $75\%$)  probability; here robots usually move freely  but can be  blocked by various obstacles (e.g., rocks, buildings). As another example, in  the information warfare the  location information of  combat machineries  is collected and usually stored on  the military database \cite{OuriWolfson:moving,Martin:Mobile}. A typical application for the coordination combat  is  retrieving the identities of the friendly machineries (e.g.,  tanks and panzers) that are currently located in a given region with no less than a specific (e.g., $85\%$)  probability; here  objects such as tanks and panzers   usually move freely without predefined routes but can be blocked by various obstacles (e.g., lakes, hills).

 \noindent \textbf{{Challenges}.}
 At the first glance, the CSPTRQ can be easily tackled by directly extending  existing methods used to answer the PTRQ. As a matter of fact,  there are several new challenges. (\romannumeral 1)  The CSPTRQ needs to handle a  set of obstacles, and so the workload  is larger, implying that to achieve a quick response time is more challenging.  (\romannumeral 2) With the presence of obstacles, the uncertainty region  $u$  is usually  a   complicated geometry (see Section \ref{sec:preliminaries} for  more details), rendering that   the subsequent computation is more difficult. (\romannumeral 3) In   a non-constrained space,  
 $u$ can be easily obtained (almost) without taking the precomputation cost, and thus existing methods  usually   pre-compute a set of bounds based on the uncertainty region $u$ and the probability density function (PDF). These bounds are used to prune/validate unqualified/qualified objects, and can significantly improve the performance, especially when  they are correctly indexed using the R-tree like data structure.  In the context of our concern, the precomputation time is rather long (up to the \textit{hour} level) \textit{even if} we only pre-compute the uncertainty
 regions. (See Section \ref{sec:preliminaries} for  more detailed discussion about \textit{bounds} and the \textit{precomputation}.)  Imagine if we further pre-compute lots of bounds, the overall precomputation
 time should be larger. With these challenges (particularly, the third one) in mind, we have to resort to other proposals. 
 
 Another method used to answer the constrained-space probabilistic range query (CSPRQ) \cite{ZhijieWang:prqumo} can be easily extended to tackle our problem. Unfortunately, a simple adaptation of this method is inefficient, due to its weak pruning/validating capability. (See Section \ref{sec:preliminaries} for  more details about the baseline method.) 
 
 Overall, we are confronted with the following troubles: (\romannumeral 1) those classical techniques (used to answer the PTRQ)  have powerful pruning/validating capabilities,  but are not well suitable for the context of our concern, and (\romannumeral 2) the method used to answer the CSPRQ is easily incorporated, but to find a feasible and powerful pruning/validating mechanism is not easy.
 
 \noindent \textbf{{Contributions}.}
 A casual trifle, shopping in a supermarket, gives  us the initial inspiration.  The shopper  freely chooses his/her wanted commodities  and finally obtains them by paying the bill. Clearly, it is a \textit{swap}: money $\longleftrightarrow$ commodities.
 This trifle reminds us that \textit{swapping}  can usually obtain the wanted things.
 With this (concept)  in mind, we revisit our problem and develop  our first  idea --- \underline{s}wapping the order of geometric operations, which  simplifies   the computation and can prune/validate some objects without the need of computing their uncertainty regions. 
 After this, by carefully considering the details, we  realize that the result obtained in the previous step possibly is a fake result, which stems from the \textit{location unreachability}. The natural method to eliminate the fault is  inefficient. Instead, our strategy is to  take advantage of the  location  unreachability. This method not only eliminates the possible fault, but also prunes some objects in the early stages.  
 All strategies developed above actually belong to \textit{spatial}  pruning/validating mechanisms.  
   
 When we strive to seek the  \textit{threshold} pruning/validating mechanisms, suddenly, we realize an interesting fact --- the CSPTRQ can be classified into two forms:  explicit  and  implicit ones (they can have different   solutions, performance results, and purposes/applications). The former returns a set of tuples in form of ($o$, $p$) such that $p\geq p_t$, where $p$ is the  probability of the moving object $o$ being located  in the query range $R$, and $0\leq p_t\leq 1$ is a given  probabilistic threshold. A potential purpose/application  is like: listing the objects (e.g., tanks) that are currently located in the region $R$ with no less than the $80\%$  probability in the descending order according to their appearance probabilities; it is similar to the following: listing the universities that are with no less than 80 points in the descending order according to their points, where the points usually be evaluated using a variety of indicators such as the publications in Nature/Science. (Remark: the traditional \textit{probabilistic range query} (PRQ) usually refers to the explicit form but $p_t=0$. Thus, the immediate purposes/applications of the explicit CSPTRQ are the similar as the ones of the traditional PRQ.)  
 In contrast, the latter  returns a set of objects, which have  probabilities higher than  $p_t$ to be located in   $R$. A potential purpose/application  is like: returning the number of objects (e.g., mobile robots) that are currently located in the region $R$ with no less than the $75\%$  probability. (Remark: the traditional \textit{probabilistic threshold range query} (PTRQ)  usually refers to the implicit form. Thus, the immediate purposes/applications of the implicit CSPTRQ are the similar as the ones of the traditional PTRQ.) 
 See  Figure \ref{fig:2a} for example. We assume there is no obstacles, $R$ is a rectangle, and the location of $o$ follows uniform distribution in $u$ for simplicity. Suppose $p_t=0.2$, the answer of explicit query is $\{ (o_2,50\%),$ $(o_3,50\%),$ $(o_4,25\%)\}$, while the answer of implicit query is $\{o_2,o_3,o_4\}$.

 \begin{figure}[h]
   \centering
      \includegraphics[scale=.45]{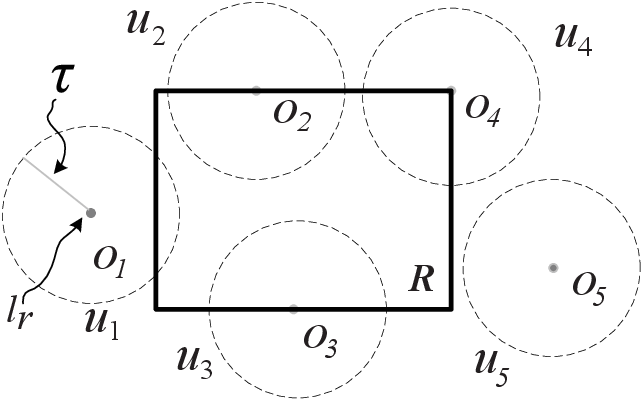} 
             \vspace{-2ex} 
  \caption{\small  Example of explicit and implicit queries, where $l_r$ denotes the recorded location, $\tau$ denotes the distance threshold, and $u_i$ denotes the uncertainty region of object $o_i$ ($i\in[1,2,\cdots,5]$)} 
 \label{fig:2a}
 \end{figure}


 
 The second main idea  is inspired by the  \textit{evolutionary algorithms} \cite{Kenneth:Evolutionarycomputation}.    
 A typical characteristic of  evolutionary algorithms is the repeated application of a set of predefined operators; and each iteration  can be generally looked as a refinement of the previous result.   
 This reminds us to compute the appearance probability $p$ in  a \underline{m}ulti-step manner, and thus objects that are obviously unqualified can be pruned in the early steps. This idea is especially effective  when the locations of  objects do not follow  uniform distribution in their uncertainty regions. The multi-step strategy yields a set of \textit{threshold} pruning/validating rules, which are employed by the explicit query. As the implicit query does not need to return the appearance probabilities of qualified objects,  an \underline{e}nhanced multi-step   strategy is naturally developed, which includes an \textit{adaptive} pruning/validating mechanism  and a two-way test mechanism. Furthermore, we further optimize our solutions based on a new insight --- different \textit{candidate moving objects} may share the same \textit{candidate restricted areas}.  
 In summary,   our  contributions are as follows: 
 \begin{itemize*}
  \item We propose the CSPTRQ, and show that (\romannumeral 1) it can be used in many applications;  (\romannumeral 2) the classical methods used to answer the traditional PTRQ are not well suitable for the context of our concern; and (\romannumeral 3) a simple adaptation of the method used to answer the CSPRQ is inefficient.
 \item We realize the CSPTRQ can be classified into two forms: explicit and implicit ones. We formally formulate them, and offer insights into  their properties.
 \item We develop techniques  to answer the explicit query, and then  extend them to answer the implicit query. Our solutions are  simple but without loss of efficiency. 
 \item We give the detailed theoretical analysis for our algorithms. While we  focus on the CSPTRQ in this paper, (part of) our techniques  can  be immediately extended  to other types of probabilistic threshold queries.  
 \item We  experimentally evaluate our algorithms using both real and synthetic data sets. The experimental results demonstrate the efficiency and effectiveness  of the proposed algorithms. From the experimental results, we can further perceive  the difference between explicit and implicit queries. This interesting finding is valuable especially for the topics of other types of probabilistic threshold queries. 
 \end{itemize*}

 \noindent \textbf{{Paper}  organization.}
 We review the related work in Section \ref{sec:2}. We formally formulate our problem and present  a baseline method in Section \ref{sec:3}.  The proposed methods for answering the explicit and implicit CSPTRQs are addressed in Section \ref{sec:eprq} and \ref{sec:iprq}, respectively. We further optimize our solution based on a new insight in Section \ref{sec:further optimziation}. We evaluate the performance of our proposed methods  through extensive experiments in Section \ref{sec:6}. 
 Finally, we conclude  this paper with several interesting  research topics  in Section \ref{sec:7}.

 \section{Related work} \label{sec:2}
 

 \noindent \textbf{{Range}  query over moving objects. } 
 Most of the representative works on \textit{range query over moving objects} have been mentioned in Section \ref{sec:1}. 
 A common aspect of those works is  not to capture the location uncertainty. In other words, they assume the current location of any object $o$ is equal to the recorded location (stored on the database server). In contrast,  we assume the current location of $o$ is uncertain.  

 \noindent \textbf{{Uncertainty} {m}odels. }
 We also mentioned many outstanding works on \textit{PRQ over uncertain moving objects} in Section \ref{sec:1}.  One of important branches assumed that  objects  move  freely (without predefined routes)  in  2D space. In this branch, there are several typical \textit{uncertainty} models like, the   free moving uncertainty (FMU) model \cite{reynoldcheng:querying,OuriWolfson:Updating}, the   moving object spatial temporal (MOST) model \cite{aprasadsistal:Modeling}, the  uncertain moving object (UMO) model \cite{meihuizhang:effectively}, the 3D cylindrical (3DC) model  \cite{gocetrajcevski:managing,dieterpfoser:capturing},  and the necklace uncertainty (NU)   model \cite{gocetrajcevski:Uncertain,BartKuijpers:Trajectory}. 
 Another important branch assumed  that  objects move on  predefined routes \cite{reynoldcheng:querying} or road networks \cite{KaiZheng:probabilistic}. They usually  adopt the line segment uncertainty  (LSU) model \cite{brucesechung:processing,reynoldcheng:querying}  to capture the location uncertainty. 
 These models   have different assumptions and purposes (e.g., 3DC  and NU models are suitable for querying the trajectories of moving objects),  but have  their own advantages (note: it is a difficult task to say which one is the
 best. Please refer to \cite{ZhijieWang:prqumo} as a summary on the differences of these models and their assumptions). 
 The  model used in  \cite{ZhijieWang:prqumo}  roughly follows the FMU model, but it is different from the FUM model, as it  introduces the concept of restricted areas (i.e., obstacles). Here we  dub it  the extensive free moving uncertainty (EFMU) model for clearness.   
 
 Though our work also uses the EFMU model, there are at least two differences: (\romannumeral 1) our work investigates  CSPTRQs (including  explicit  and  implicit ones) rather than the CSPRQ, and (\romannumeral 2) our work employs  a more generic shaped query range.
 
 
 

 
 \noindent \textbf{{Probabilistic} {t}hreshold {r}ange {q}uery. }
 According to the theme of this paper, we classify \textit{PTRQ}s into two subcategories: PTRQs for  moving objects  and the ones  for other uncertain data (note: the terms  ``PRQ'' and ``PTRQ'' are somewhat abused in the  literature, we take those papers, which explicitly discussed the probabilistic threshold,  as the  related work of  the PTRQ). 
 
 Many excellent works addressed the PTRQ for moving objects. For example, Chung et al. \cite{brucesechung:processing} addressed the PTRQ for objects moving in one-dimensional (1D) space. In contrast, we focus on the objects moving in 2D space. Zhang et al. \cite{meihuizhang:effectively} studied the PTRQ over objects moving in  2D space. They proposed the UMO model, in which they assumed both the \textit{distribution} of velocity and the one of location    are available at the update time.  In contrast,  we do not need  to know the  velocity (as well as its distribution), instead we assume the specific location   of any object $o$  is available at the update time. Moreover,  the used model in this paper is the EFMU model, which  considers the existence of restricted areas. Zheng et al. \cite{KaiZheng:probabilistic} studied the PTRQ for  objects moving on the road networks. They proposed the UTH model that is developed for querying the trajectories of moving objects. In contrast, this paper is not interested in querying the trajectories, and it focuses on the objects moving in the constrained 2D space where no predefined route is given.

 
 There are  many classical papers that studied the PTRQ for other uncertain data. For example, Cheng et al. \cite{reynoldcheng:efficient} addressed the PTRQ over 1D uncertain data (e.g. sensor data), they presented a clever idea, using a tighter  bound (compared to the  MBR of the uncertainty interval),  called \textit{x-bound},    to   reduce the search cost. Later, Tao et al. \cite{yufeitao:range} extended this idea to   multi-dimensional uncertain data. They proposed a classical technique,  probabilistic constrained region (PCR), which consists of a set of precomputed bounds, called  \textit{p-bound}s. This classical  technique is not well suitable for the context of our concern, Section \ref{sec:1} has shown the reasons (more detailed discussion will be given in Section \ref{sec:preliminaries}).     Chen et al. \cite{jinchuanchen:efficient} studied the PTRQ for such a scenario where the location of query issuer is uncertain (a.k.a, location based PTRQ);  several smart ideas such as the \textit{query expansion} were developed. They assumed the query range $R$ and uncertainty region $u$ are rectangles, and  focused on the non-constrained space, and thus employed the \textit{p-bound}s technique. In contrast,  both $R$ and $u$ used in our work are more complex, and we focus on the constrained space, where the \textit{p-bound}s technique has some limitations (again,  Section \ref{sec:1} has shown the reasons). Moreover, our work does not belong to the location based PTRQ.   
 
 \noindent \textbf{{Other} probabilistic threshold  queries.}
 There are also many representative works that addressed other probabilistic threshold  queries (PTQs); those works are clearly different from ours. For instance, Zhang et al. studied the \textit{location based} {probabilistic threshold range \textit{aggregated}  query} \cite{yingzhang:Efficient}. Hua et al. \cite{Minghua:rankingQueries} addressed the  probabilistic threshold \textit{ranking} query on uncertain data. The probabilistic threshold \textit{KNN} query over uncertain data was investigated by Cheng et al. \cite{reynoldcheng:evaluatingPTKNNQ}. Yuan et al. \cite{yeyuan:efficientlyAPTBSPQ}  discussed the probabilistic threshold \textit{shortest path} query over uncertain graphs. The \textit{general} PTQ for arbitrary SQL queries that involve \textit{selections}, \textit{projections}, and \textit{joins} was studied by Qi et al. \cite{yinianqi:thresholdquery}.

 

 

 \section{Problem definition } \label{sec:3}

 \subsection{Problem settings and notations}\label{subsec:problem setting}
 Let $R$ be the query range. Let $r$ denote the restricted area, and   ${\mathscr R}$ be a set of     disjoint restricted areas. Let  $\mathbb{T}$ be a territory such that $\bigcup_{r\in \mathscr{R}}{r}\subset { \mathbb{T}}$.  
 Let $o$ denote the moving object, and  ${\mathscr O}$ be a set of  moving objects. 
 Let $l_r$ be the latest recorded location (stored on the database server) of $o$, and $l_t$ be the location of $o$ at an arbitrary instant of time $t$. We assume that  $l_t\notin\bigcup_{r\in {\mathscr R}}{r}$ and $l_t\in {\mathbb{T}}-\bigcup_{r\in {\mathscr R}}{r}$. 
 Let $\tau$ be the distance threshold  of $o$. We assume any object $o$  reports \textit{its new location} to the server once  $dist(l_{t_n},l_r)\geq$ $\tau$, where $l_{t_n}$ denotes its current specific location, $dist(\cdot)$ denotes the Euclidean distance. 
 Finally, for any two different objects $o$ and $o^\prime$, we assume they cannot be located in the same location at the same instant of time $t$, i.e., $l_t\neq l_t^{\prime}$.

 
 
 
 
 We  model both the query range and  restricted areas  as  the arbitrary shaped polygons{\small \footnote{\small  Any curve can be approximated into a polyline (e.g., by an interpolation method). Hence in theory any shaped restricted area  or query range  can be approximated into a polygon.}}. We capture the location uncertainty using  two components  \cite{reynoldcheng:querying,OuriWolfson:Updating}.


 
 \begin{definition}[\small Uncertainty region]
 {
 {The {uncertainty region} of a moving object $o$ at a given time $t$, denoted by $u^t$, is a closed region where $o$ can always be found.}
 
 }
 \end{definition}

 \begin{definition}[\small Uncertainty probability density function]
 {
 {The {uncertainty probability density function} of $o$ at time $t$, denoted by $f^t(x,y)$, is a probability density function (PDF) of $o$'s location at a given time $t$; its value is 0  if $l_{t}$ $\notin$ ${ u}^{t}$.}
 }
 \end{definition}

 The PDF has the property that $\int_{{u}^t} f^t(x,y)dxdy=1$. In addition,   under the distance based  update policy (a.k.a., dead-reckoning policy \cite{OuriWolfson:Updating,reynoldcheng:querying}), for any two different  time $t_1$ and $t_2$ ($t_1,t_2\in$ ($t_r$,$t_n$]),   the following conditions always hold:  $u^{t_1}=u^{t_2}$ and $f^{t_1}(x,y)=f^{t_2}(x,y)$, 
 where  $t_r$ refers to the latest reporting time,   $t_n$ refers to the current time. 
 Hence,  unless stated otherwise, we  use $u$ and $f(x,y)$ to denote the uncertainty region and PDF of $o$, respectively.   (Remark:  if the time based update policy is assumed to be adopted, such a topic is more interesting and also more challenging, since the uncertainty region $u$ is to be a continuously changing geometry
 over time. See, e.g.,  \cite{ZhijieWang:prqumo} for a clue about the relation between the location update policy and the uncertainty region $u$.) 
 With the presence of restricted areas (i.e., obstacles), the uncertainty region $u$  under the distance based update policy can be formalized as follows. 
 \begin{equation}\label{equ:temp compute u}
 u=o.\odot-\bigcup_{r\in {\mathscr R}}r
 \end{equation}
 where $o.\odot$  denotes a circle   with the centre  $l_r$ and radius $\tau$. 
 We remark that, in the rest of this paper, we abuse the notation `$|\cdot|$', but its meaning should be clear from the context. 
 In addition, unless stated otherwise, a notation or symbol with a subscript `b'  usually refers to its corresponding minimum bounding rectangle (MBR). For instance, $R_b$ refers to the MBR of $R$. For convenience, Table \ref{tab:main symbols} summarizes the  notations used frequently in the rest of this paper.

 \begin{table}[h]
 \begin{center}
 \begin{tabular}{|p{.18\textwidth}| p{.6\textwidth}  | }\hline

 {{\small \textbf{Notations}}}&{{\small \textbf{Meanings}}} \\ \hline 
 



 {\small $\mathscr R^*$}  & {\small  the set of candidate restricted areas}	\\ \hline

  {\small $\mathscr O^*$}  & {\small the set of candidate moving objects} 	\\ \hline 
  
  {\small $R_b$}  & {\small  the minimum bounding rectangle of the query range $R$}  \\ \hline
 
 

 
 {\small $\tau$}  & {\small distance threshold } 	\\ \hline 
 
  {\small $l_r$}  & {\small  recorded location of a moving object $o$}	\\ \hline 
  
  {\small $o.\odot$}  & {\small  circle with  the centre $l_r$ and radius $\tau$}	\\ \hline 
  
   {\small $\mathscr I_r$}  & {\small  index of restricted areas } 	\\ \hline 
  
    {\small $\mathscr I_o$}  & {\small  index of moving objects }	\\ \hline 
  
 
 
 {\small $p_t$}  & {\small probabilistic threshold} 	\\ \hline 
 

 {\small $u_o$}  & {\small  outer ring of uncertainty region $u$} \\ \hline

  {\small $u_h^i$}  & {\small  the $i$th hole in uncertainty region $u$}	\\ \hline 
 
 {\small $\mathscr H$}  & {\small the set of holes in uncertainty region $u$}	\\ \hline 
 
  {\small $s$}  & {\small  intersection result between $R$ and $u$}	\\\hline 
 
  {\small $|s|$}  & {\small the number of subdivisions of $s$}	\\ \hline 
 
  {\small $s[i]$}  & {\small the $i$th subdivision of $s$}	\\ \hline 
 
  {\small $s[i]_o$}  & {\small  outer ring of $s[i]$}	\\ \hline 
 
 {\small $\mathscr H^*$}  & {\small the set of all holes in $s$}	\\  \hline 
 
 {\small $s_h^j$}  & {\small the $j$th hole among all the $|\mathscr H^*|$ holes of $s$}	\\  \hline



  
 
 {\small $\gamma$}  & {\small  reference value} 	\\

  \hline
 \end{tabular}
 \end{center}
 \vspace{-2ex}
 \caption{\small Notations and their descriptions}\label{tab:main symbols}
 \end{table}
 
 \vspace{-1ex}


 \subsection{Problem statement}\label{subsec: problem statement}
 Let $p_t$ be the probabilistic threshold, we have 
 \begin{definition}
 {Given a set $\mathscr R$ of restricted areas,  a set $\mathscr O$ of moving
 objects in a territory $\mathbb{T}$, and a query range $R$, 
 an explicit constrained-space probabilistic threshold range query (ECSPTRQ)   returns a 
 set of tuples in form of ($o$, $p$) such that   $p\geq p_t$, where $p$ is the  probability of $o$    being located in $R$, and is computed as
 \begin{equation}\label{equation:pj}
 p=\int_{u\cap R} f(x,y)dxdy
 \end{equation}
 
 }
 \end{definition}
 
 We note that $f(x,y)$=$\frac{1}{\alpha(u)}$ when the location of $o$  follows uniform distribution in its uncertainty region $u$,  where $\alpha(\cdot)$ denotes the area of this geometric entity. In this case,  we have

 \begin{equation}
 \label{equation:area divide area}
 p= \frac{\alpha(u\cap R)}{\alpha(u)}
 \end{equation}

 \begin{definition}
 {Given a set $\mathscr R$ of restricted areas,  a set $\mathscr O$ of moving
 objects in a territory $\mathbb{T}$, and a query range $R$, 
 an implicit  constrained-space probabilistic threshold range query (ICSPTRQ)   returns all the objects $o$ such that $p\geq p_t$,  where $p$ is the  probability of $o$  being located in  $R$, and is computed according to Equation (\ref{equation:pj}).
 }
 \end{definition}

 
 

 
 We remark that though the differences of two queries above are minor at the first glance,  we will present  different solutions respectively in Section \ref{sec:eprq} and \ref{sec:iprq}, and  show their different performance results in Section \ref{sec:6}.  Sometimes, we also use terms  the \textit{explicit query} and the \textit{implicit query} to denote the above two queries in the rest of this paper.  For ease of understanding the proposed methods,  we next introduce a baseline method.

 \subsection{Baseline method} \label{sec:preliminaries}

 The baseline method is a simple adaptation of the method in \cite{ZhijieWang:prqumo}. To save  space, we only present an overall framework of the baseline method. 
 
 \noindent \textbf{{Preprocessing} stage.}
 Here a twin-index is adopted (e.g., a pair of R-trees or its variant): one  is used to manage the set $\mathscr R$ of restricted areas; another is used to manage the set $\mathscr O$ of moving objects. To index restricted areas is  simple, since we model them as arbitrary polygons. Naturally,  we  can easily find the  MBR  of any restricted area $r$ ($\in \mathscr R$).  In order to manage the set $\mathscr O$ of moving objects, we here index them based on their  recorded locations $l_r$  and distance thresholds $\tau$. Specifically, for each object $o$, its MBR  is a square  centering at  $l_r$    with  $ 2\tau\times 2\tau$ size. For clearness, let   ${\mathscr I}_o$ and ${\mathscr I}_r$ be the  index of moving objects and the one of restricted areas, respectively. 
 
 \noindent \textbf{{Query} processing stage.}
 We first give two definitions  before discussing the details. 
 \begin{definition}[\small Candidate moving object]
 Given    a moving object $o$   and   the query range ${R}$,   $o$ is a candidate moving object  such that $R_b\cap o.\odot_b$$\neq$$\emptyset$.
 \end{definition}
 
 \begin{definition}[\small Candidate restricted area]
 Given    a moving object $o$  and a restricted area $r$,  $r$ is a candidate restricted area  such that $r_b\cap o.\odot_b$$\neq$$\emptyset$.
 \end{definition}
 
 Let ${\mathscr R^*}$ denote the set of candidate restricted areas, and ${\mathscr O^*}$ denote the set of candidate moving objects. There are several main steps for answering the implicit (or explicit) query. First, we search ${\mathscr O^*}$ on $\mathscr I_o$ using $R_b$  as the input (here most of unrelated objects are to be pruned). Second, for each object $o\in \mathscr O^*$, we search $\mathscr R^*$ on $\mathscr I_r$ using $o.\odot_b$ as the input (here most of  unrelated restricted areas are to be pruned).  We compute  $o$'s uncertainty region $u$, and then compute ``$u\cap R$''{\small \footnote{\small Note that, the  algorithm in  \cite{ZhijieWang:prqumo}  cannot support the generic shaped query range, and thus  some modifications   are necessary and inevitable when we compute $u\cap R$; moreover, the details of managing  complicated geometric regions (e.g., $u$) can be found in that paper. }}. After this, we compute  $p$ using Equation (\ref{equation:pj}). We put $o$ (or ($o,p$)) into the result if $p\geq p_t$.  Otherwise, we discard it and process the next object. After all candidate moving objects are handled, we finally return the result, in which all qualified objects are included.

 \noindent \textbf{{Update} stage.}
 When an object $o$ reports its new location to the server, we update the database record, i.e., $l_r$. At the same time, we update the index of moving objects, i.e., ${\mathscr I}_o$.
 
 \begin{figure}[t]
   \centering
   \subfigure[\small ]{\label{fig:1new:a}
      \includegraphics[scale=.43]{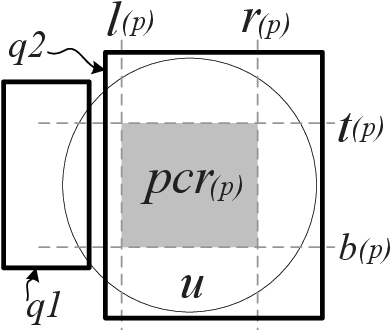}} 
   \subfigure[\small ]{\label{fig:1new:b}
       \includegraphics[scale=.43]{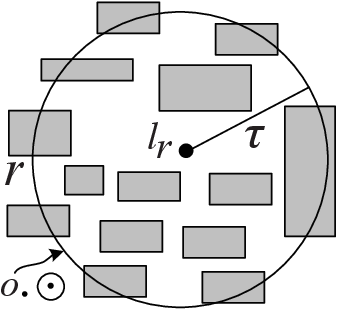}} 
   \subfigure[\small ]{\label{fig:1new:c}
       \includegraphics[scale=.43]{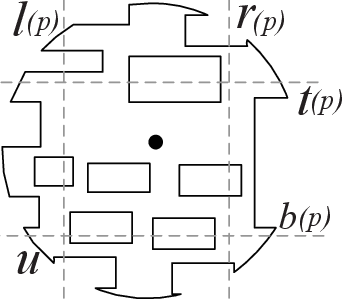}} 
   \subfigure[\small ]{\label{fig:1new:d}
       \includegraphics[scale=.32]{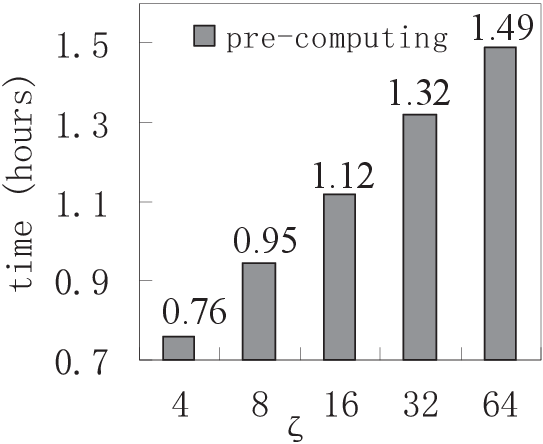}} 
             \vspace{-2ex} 
  \caption{\small Illustration of of p-bounds and the precomutaion. (a) The case of no restricted areas. (b) The case of existing restricted areas. (c) The uncertainty region. (d) The precomputaion time when $|\mathscr R|=|\mathscr O|=50k$. $\zeta$ denotes the number of edges in each restricted area $r$ (note: it may be somewhat difficult to understand this figure, and the readers can revisit it after reading Section \ref{sec:6}).  } 
  \label{fig:bounds and precomputation}
 \end{figure}
 
 \noindent \textbf{{{Discussion}}.}
 The readers may be curious why the baseline method does not employ existing \textit{threshold} pruning/validating mechanisms such as \textit{p-bounds} in \cite{yufeitao:range,jinchuanchen:efficient,yufeitao:indexing,yingzhang:Efficient}. Briefly speaking, a \textit{p-bound} of the uncertainty region $u$ (of the object $o$) is a function of $p$, where $p\in[0,0.5]$. A probabilistically constrained  region (PCR) with the parameter $p$, denoted by $o.pcr(p)$, consists of  four   \textit{p-bounds}, namely  $l(p)$, $r(p)$, $t(p)$ and $b(p)$,  see the four dashed lines in  Figure \ref{fig:1new:a}. The line $l(p)$ divides the uncertainty region $u$ (i.e., the circle) into two parts (on the left and right of $l(p)$ respectively), and the appearance probability of $o$ on the left part equals $p$. (Other three lines have similar meanings.) The grey region illustrates  $o.pcr(p)$. Assume the parameter $p$ in Figure \ref{fig:1new:a} is 0.2; moreover, assume the probabilistic threshold $p_t=0.8$, and if  $q1$  is the query range, then  $o$ is an unqualified object, and thus to be pruned.  In contrast, if $q2$  is the query range, then  $o$ is a qualified object, and thus to be validated. The example above illustrates the rationale of the classical \textit{p-bounds} technique. In a non-constrained space (i.e., no obstacles exist), all the uncertainty regions can be easily obtained (almost) without taking the precomputation cost, and thus  pre-computing a set of \textit{p-bounds} is feasible. However, in  the context of our concern,  the precomputation time is  rather long (up to the \textit{hour} level) \textit{even if} we only pre-compute the uncertainty regions. Figure \ref{fig:1new:d} reports the time of \textit{pre-computing} a set of uncertainty regions.   Imagine if we further pre-compute lots of \textit{p-bounds}, then the overall precomputation time should be larger. This is the  main reason why  the \textit{p-bounds} technique is not well suitable for our problem.   Other minor (non-fatal)  reasons    have already been mentioned in Section \ref{sec:1}. For example, the closed region with many holes shown in Figure \ref{fig:1new:c} illustrates the uncertainty region $u$, which is derived from Figure \ref{fig:1new:b} based on Equation (\ref{equ:temp compute u}). Clearly, to obtain $o.pcr(p)$ in Figure \ref{fig:1new:c} is more difficult than the case of no restricted areas (e.g., see Figure \ref{fig:1new:a}).

 To this step, it seems no better solution except the baseline method.  A casual trifle, shopping in a supermarket, gives us the initial inspiration (recall Section \ref{sec:1}). In the next section, we show  the details of our ideas, and then present the algorithm to answer the  explicit query.

 \section{Explicit CSPTRQ}\label{sec:eprq}

 \subsection{Spatial pruning/validating rules}\label{subsec:computation duality}
 For each object $o\in \mathscr O^*$, once we obtain the set $\mathscr R^*$ of candidate restricted areas,   the baseline method is to directly compute its uncertainty region $u$, and then to compute the  intersection result between $R$ and $u$. Let $s$ be the intersection result between $R$ and $u$,  it can be  formalized as follows.
 \begin{equation}
 s=u\cap R=(o.\odot -\bigcup _{r\in \mathscr R^*}r)\cap R
 \end{equation}

 Our method is to  \textit{swap} the order of geometric operations. The rationale behind it is surprisingly simple. Specifically, we  first compute ``$o.\odot \cap R$'', and then use the result of  ``$o.\odot \cap R$'' to subtract $\bigcup _{r\in \mathscr R^*}r$. It is formalized as follows.
 \begin{equation}\label{equation:swapping compuation}
 s=(o.\odot \cap R)-\bigcup _{r\in \mathscr R^*}r 
 \end{equation}
 
 There are  two significant  benefits by swapping the order of geometric operations. 
 
 (1) We can   prune some   objects, without the need of computing their uncertainty regions.
 Assume   ``q1'' shown in Figure \ref{fig:computation duality:a} is the query range $R$. Clearly, $o$  is a candidate moving object since $o.\odot_b$ intersects with $R_b$.  Here $o$  can be safely pruned  without the need of computing its uncertainty region $u$, since ``$R\cap o.\odot=\emptyset$''. Similarly, assume that ``q2'' is $R$. Here ``$R\cap o.\odot\neq \emptyset$''  (see Figure \ref{fig:computation duality:a}), but $(o.\odot \cap R)-\bigcup _{r\in \mathscr R^*} r=\emptyset$ (see Figure \ref{fig:computation duality:b}). Hence $o$   can also be pruned safely  without the need of computing $u$. 
 
 (2) We no longer need to  consider each $r\in \mathscr R^*$, which simplifies the computation of $s$.  For example, regarding to ``q2'', only the right most candidate restricted area is relevant with the computation of $s$. Similarly, regarding to ``q3'' shown in Figure \ref{fig:computation duality:b}, only two candidate restricted areas are relevant with the computation of $s$.

 \begin{figure}[t]
   \centering
   \subfigure[\small ]{\label{fig:computation duality:a}
      \includegraphics[scale=.43]{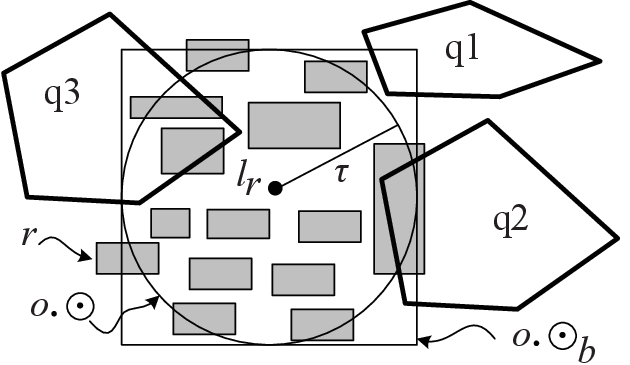}} 
      \hspace{2ex}
   \subfigure[\small ]{\label{fig:computation duality:b}
       \includegraphics[scale=.43]{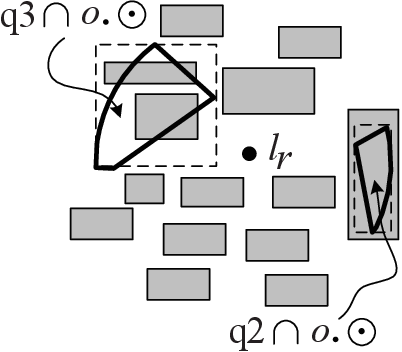}} 
             \vspace{-2ex}           
  \caption{\small   Example of swapping the order of gemetric operations} 
  \label{fig:computation duality}
 \end{figure}

 Hence, by swapping the order of geometric operations,  we can easily develop the following pruning/validating rules. 
 
 \begin{lemma}\label{Lemma:the first prune}
 Given the query range $R$ and an object $o\in \mathscr O^*$, we have
 \begin{itemize*}
 \item If $R\cap o.\odot=\emptyset$, then $o$ can be pruned safely.
 \item If $R\cap o.\odot=o.\odot$, then $o$ can be validated safely.
 \end{itemize*}
 \end{lemma}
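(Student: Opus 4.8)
The plan is to unwind the definitions and observe that each claim reduces to a monotonicity property of the integral in Equation \ref{equation:pj}. Recall that $p=\int_{u\cap R} f(x,y)\,dxdy$ and that $f$ is a PDF supported on $u$, so $\int_u f(x,y)\,dxdy=1$ and $f\geq 0$ everywhere.

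For the first bullet, I would start from the expression $s=(o.\odot\cap R)-\bigcup_{r\in\mathscr R^*}r$ given in Equation \ref{equation:swapping compuation}, together with $u=o.\odot-\bigcup_{r\in\mathscr R}r$. Since $u\subseteq o.\odot$, the hypothesis $R\cap o.\odot=\emptyset$ forces $R\cap u=\emptyset$, hence $u\cap R=\emptyset$ and $p=\int_{\emptyset} f(x,y)\,dxdy=0$. Because $p_t\geq 0$ is only a necessary-membership bound and the query asks for $p\geq p_t$ with a strictly positive threshold in any nontrivial instance (and $p=0$ can only qualify when $p_t=0$, a degenerate case one may note separately), $o$ contributes nothing to the answer of the ECSPTRQ and can be discarded; for the ICSPTRQ the identical argument applies since membership there is also governed by $p\geq p_t$. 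I would phrase this cleanly by saying: $p=0\leq p_t$ is impossible to strictly exceed, so $o$ is safely pruned.

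For the second bullet, the hypothesis $R\cap o.\odot=o.\odot$ means $o.\odot\subseteq R$. Then $u=o.\odot-\bigcup_{r\in\mathscr R}r\subseteq o.\odot\subseteq R$, so $u\cap R=u$. Therefore $p=\int_{u\cap R} f(x,y)\,dxdy=\int_{u} f(x,y)\,dxdy=1\geq p_t$, since $p_t\leq 1$. Hence $o$ qualifies regardless of the probability threshold and can be reported (``validated'') without any geometric computation of $u\cap R$; in the explicit case the emitted tuple is simply $(o,1)$.

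The argument is essentially a two-line set-containment chasing plus the normalization of $f$, so there is no real technical obstacle; the only point requiring a little care is the boundary behaviour of the set operations — specifically making sure that ``$R\cap o.\odot=\emptyset$'' is used at the level of the regions actually integrated over (i.e.\ that $u\cap R$ genuinely has zero measure), and that in the validation case the subtraction of restricted areas cannot enlarge $u$ beyond $o.\odot$. Both are immediate from $u\subseteq o.\odot$, which is built into Equation (1). I would also add a one-sentence remark handling the degenerate threshold values $p_t=0$ (everything qualifies, so pruning in the first bullet is vacuous but still correct) and $p_t=1$ (only probability-$1$ objects qualify, so the validation in the second bullet is exactly tight), so that the statement is literally correct for the full range $0\leq p_t\leq 1$.
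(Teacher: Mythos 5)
Your proposal is correct and matches the paper's (unstated) reasoning: the paper simply declares the lemma ``immediate by analytic geometry,'' and your set-containment argument via $u\subseteq o.\odot$, giving $u\cap R=\emptyset$ (hence $p=0$) in the first case and $u\cap R=u$ (hence $p=1$) in the second, is exactly the intended justification. Your side remark on the degenerate threshold $p_t=0$ is a fair observation the paper glosses over, but it does not affect the substance.
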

 \noindent \textbf{Proof.}
 The proof is immediate by \textit{analytic geometry}. 
  {\raggedleft $\square$}
 
 Let ${\mathscr R}^\prime$ be a set of restricted areas such that the MBR of each $r\in {\mathscr R}^\prime$  has non-empty intersection set   with the MBR of $o.\odot \cap R$, we have an immediate corollary below. 
 
 \begin{corollary}\label{Lemma:the first prune copy}
 Given the query range $R$ and an object $o\in \mathscr O^*$, $o$ can be pruned safely if $(o.\odot \cap R)-\bigcup _{r\in \mathscr R^\prime}r =\emptyset$.  {\raggedleft $\square$}
 \end{corollary}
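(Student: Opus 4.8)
The plan is to derive Corollary \ref{Lemma:the first prune copy} directly from Lemma \ref{Lemma:the first prune} together with Equation \ref{equation:swapping compuation}, so the work is essentially to justify that restricting the subtracted union from $\bigcup_{r\in\mathscr R^*}r$ to $\bigcup_{r\in\mathscr R'}r$ does not change the emptiness of $s$. First I would recall that, by the computation duality (Equation \ref{equation:swapping compuation}), the true intersection result is $s=(o.\odot\cap R)-\bigcup_{r\in\mathscr R^*}r$, and by definition $o$ contributes nothing to the query answer precisely when $s=\emptyset$ (since $p=\int_{u\cap R}f=\int_s f\le 0$ forces $p<p_t$ whenever $p_t>0$, and the degenerate case $p_t=0$ is handled separately as every object qualifies). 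So it suffices to show $(o.\odot\cap R)-\bigcup_{r\in\mathscr R'}r=\emptyset$ implies $(o.\odot\cap R)-\bigcup_{r\in\mathscr R^*}r=\emptyset$.

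The key observation is a monotonicity/containment fact about set subtraction: if $A\subseteq B$, then $X-B\subseteq X-A$ for any $X$. Here I would take $X=o.\odot\cap R$, $A=\bigcup_{r\in\mathscr R'}r$, and $B=\bigcup_{r\in\mathscr R^*}r$. I must therefore argue $\mathscr R'\subseteq\mathscr R^*$ up to the parts that matter — more precisely, that every restricted area $r\in\mathscr R^*$ which actually overlaps $o.\odot\cap R$ lies in $\mathscr R'$. This is where the MBR condition in the definition of $\mathscr R'$ enters: if $r\cap(o.\odot\cap R)\neq\emptyset$ then certainly $r_b\cap(o.\odot\cap R)_b\neq\emptyset$ (any point in the intersection lies in both MBRs), so $r\in\mathscr R'$. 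Consequently $\bigl(\bigcup_{r\in\mathscr R^*}r\bigr)\cap(o.\odot\cap R)=\bigl(\bigcup_{r\in\mathscr R'}r\bigr)\cap(o.\odot\cap R)$, which gives $(o.\odot\cap R)-\bigcup_{r\in\mathscr R^*}r=(o.\odot\cap R)-\bigcup_{r\in\mathscr R'}r$. Hence the two subtraction results are not merely both empty — they are in fact equal — and the corollary follows, with the pruning criterion in the hypothesis being exactly the emptiness of $s$.

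I would present it compactly: state the set-algebra identity $(o.\odot\cap R)-\bigcup_{r\in\mathscr R^*}r=(o.\odot\cap R)-\bigcup_{r\in\mathscr R'}r$ as the crux, prove it via the MBR-containment argument above, and then invoke Lemma \ref{Lemma:the first prune} / Equation \ref{equation:swapping compuation} to conclude that an empty $s$ means $o$ carries zero probability mass inside $R$ and can be discarded. The only mild subtlety — and the part I would be most careful about — is the boundary/measure-zero behavior of the MBR test: a restricted area whose MBR merely touches $(o.\odot\cap R)_b$ on a boundary but whose interior is disjoint from $o.\odot\cap R$ gets (harmlessly) included in $\mathscr R'$, which only makes the pruning condition \emph{stronger}, never unsound; conversely, any $r$ that genuinely removes area from $o.\odot\cap R$ cannot be missed. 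So the ``hard part'' is really just stating the MBR-superset direction cleanly rather than any deep geometric fact, and everything else is the one-line containment $X-B\subseteq X-A$ for $A\subseteq B$.
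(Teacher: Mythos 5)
Your argument is correct, and since the paper offers no proof of this corollary (it is simply declared immediate), you are filling a real gap; but you prove more than is needed, and the direction of the containment you labor over is the one that does \emph{not} matter for pruning soundness. The short route is: $\mathscr R^\prime\subseteq\mathscr R$ (in fact $\mathscr R^\prime\subseteq\mathscr R^*$, because $(o.\odot\cap R)_b\subseteq o.\odot_b$), so the true region $u\cap R=(o.\odot\cap R)-\bigcup_{r\in\mathscr R}r$ is contained in $(o.\odot\cap R)-\bigcup_{r\in\mathscr R^\prime}r$; emptiness of the latter therefore forces $u\cap R=\emptyset$, hence $p=0$, with no MBR-completeness argument at all. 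The step you emphasize --- that every $r$ genuinely overlapping $o.\odot\cap R$ satisfies $r_b\cap(o.\odot\cap R)_b\neq\emptyset$ and so lies in $\mathscr R^\prime$ --- yields the stronger identity $(o.\odot\cap R)-\bigcup_{r\in\mathscr R^*}r=(o.\odot\cap R)-\bigcup_{r\in\mathscr R^\prime}r$. That identity is what legitimizes Line 6 of Algorithm 1 computing $s$ from $\mathscr R^\prime$ alone and using that $s$ downstream, so it is worth recording, but it is not required here. Also, be careful with the middle of your argument: you invoke $X-B\subseteq X-A$ for $A\subseteq B$ with $A=\bigcup_{r\in\mathscr R^\prime}r$ and $B=\bigcup_{r\in\mathscr R^*}r$, which requires $\mathscr R^\prime\subseteq\mathscr R^*$, yet what you then prove is the reverse inclusion restricted to $o.\odot\cap R$; your eventual equality-based conclusion is sound, but the two steps should not be conflated. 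Your explicit handling of the degenerate case $p_t=0$ is more careful than anything in the paper.
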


 \noindent \textbf{Discussion.}  
 We remark that the swapping operation itself is very easy, as it does not rely on any complicated technique. Furthermore, after we swap the order of geometric operations, to develop the pruning/validating rules is also not difficult.  
 We highlight it  because it is surprisingly simple but clearly efficient. 
 
 Now, for any object $o\in \mathscr O^*$, if it has not been pruned (or validated) by Lemma \ref{Lemma:the first prune} or Corollary \ref{Lemma:the first prune copy}, whether or not  we can directly compute its appearance probability $p$ using  Equation (\ref{equation:pj})? At the first sight,  it seems to be sure. 
 However, we should note that the intersection result $s$ obtained by Equation (\ref{equation:swapping compuation}) is possibly a fake result. We next share our insights and explain the details. 
 \subsubsection{Why is it possibly a fake result?}\label{subsec:why fake result}
 The  fake result  stems  from the \textit{location unreachability}. To explain it, we need  some basic concepts.

 Given $o.\odot$ and  a set $\mathscr R^*$ of candidate restricted areas, we say a  restricted area $r\in \mathscr R^*$ can subdivide $o.\odot$, \textit{if and only if} the result of ``$o.\odot-r$'' consists of multiple disjoint closed regions. We term each of those closed regions as  a \textit{subdivision}. 
 Let $\mathscr D$ denote the set of subdivisions, we say a subdivision $d\in \mathscr D$ is an  \textit{effective subdivision} such that $l_r\in d$, where $l_r$ is the (latest) recorded location of $o$ (recall Section \ref{subsec:problem setting}). 
 \begin{theorem}\label{theorem:unreachable}
 Assume that a  restricted area $r\in \mathscr R^*$ subdivides $o.\odot$, and $\mathscr D$ is the set  of subdivisions, if a subdivision $d\in \mathscr D$ is not the effective subdivision, then any point $p^\prime\in d$ is unreachable.  
 \end{theorem}
 \noindent \textbf{Proof.}
 It is easy to know that the object $o$ is located in $o.\odot$, as we adopt the \textit{distance based update policy}, recall Section \ref{subsec:problem setting}. We prove $p^\prime \in d$ is unreachable by contraction. Assume that $o$ can reach the point $p^\prime$, implying that there exists at least a path from $l_r$ to $p^\prime$ such that it does not directly pass through any restricted area and also the boundary of $o.\odot$. However, by the condition ``$d$ is not the effective subdivision'', implying that $l_r$ and $p^\prime$ are located respectively in two disjoint closed regions.  Based on \textit{analytic geometry}, it is clear that no such a path exists. This completes the proof.
  {\raggedleft $\square$}
 
 Theorem \ref{theorem:unreachable} gives us the insight into the location unreachability. See  Figure  \ref{fig:unreachability:c} for example, here the subdivision above $r_1$ and the one below $r_2$  are \textit{unreachable}.    Hence, $o$'s real uncertainty region, $u$, is the subdivision below $r_1$ and above $r_2$. 
 With this (concept) in mind, we next use a more  targeted example to show why $s$ obtained by Equation (\ref{equation:swapping compuation})  possibly is  a fake result. The shadow  region shown in  Figure \ref{fig:unreachability:a} or \ref{fig:unreachability:b}  illustrates $s$ obtained by Equation (\ref{equation:swapping compuation}), which is not equal to $\emptyset$. Here $o$  \textit{cannot} be pruned/validated based on Lemma \ref{Lemma:the first prune} and Corollary \ref{Lemma:the first prune copy}. The closed region with many holes shown in Figure \ref{fig:unreachability:b} illustrates $u$. For simplicity, assume that the location of $o$ follows uniform distribution in $u$. In this example, if we simply use the area of the shadow region to divide the area of $u$, we will get that $p$ is a positive number rather than 0. Clearly, it is a false answer, since $u$  and $s$  are disjoint, see Figure \ref{fig:unreachability:b}.
 
 \begin{figure}[t]
   \centering
   \subfigure[\small ]{\label{fig:unreachability:c}
      \includegraphics[scale=.6]{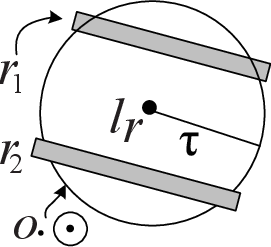}} 
      \hspace{4ex}
   \subfigure[\small ]{\label{fig:newfig:a}
      \includegraphics[scale=.6]{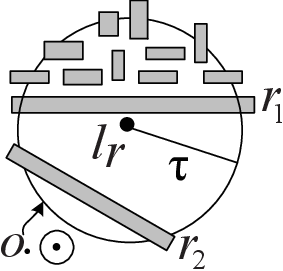}} 
             \vspace{-2ex} 
  \caption{\small Illustration of the location unreachability} 
  \label{fig:newfigure}
 \end{figure}

 

 \subsubsection{Natural solution}\label{subsubsub:why is fake result}
 
 To eliminate the fault produced by the above problem, the natural solution is   to compute $u$,  and then to check if $u$ intersects with   $s$.  If they are disjoint, then $p=0$ and $o$ should be pruned.  This approach  can indeed be used to eliminate the fault but it is   inefficient. 
 We next review two approaches \cite{ZhijieWang:prqumo} that are used to compute $u$, and then show the underlying reason.


 Given a closed region $c$,  we  let  $v^-$, $v^+$, $h^-$, $h^+$ denote the four (left, right, bottom, top) bounding lines of $c$, respectively. The \textit{span} of $c$ is $\mathop{\mathrm{arg max} }  \{dist(v^-,v^+),$ $dist(h^-,h^+) \}$, where $dist(\cdot)$ denotes the Euclidean distance.






 \begin{heuristic}
 Given  $o.\odot$, and two different candidate restricted areas, the candidate restricted area with the larger span is more likely to subdivide $o.\odot$ into multiple subdivisions. 
 \end{heuristic}
 
 To compute $u$, there are two approaches. The first one is using $o.\odot$ to subtract each restricted area $r\in \mathscr R^*$ one by one, and finally it chooses the subdivision \textit{containing} the point $l_r$ as the uncertainty region $u$ (see, e.g., Figure \ref{fig:unreachability:c}). 
 For ease of describing the second approach, we let $d^e$ denote the \textit{effective subdivision} (recall Section \ref{subsec:why fake result}), and \textit{slightly abuse}  the notation $d^e$.  
 
 
 The second one incorporates Heuristic 1, and can be generally described as follows. First, it  sorts  the set $\mathscr R^*$ of candidate restricted areas according to their \textit{spans} in the descending order  (implying that the restricted area $r\in \mathscr R^*$ with the larger span is to be handled firstly); and then it   uses  $o.\odot$ to subtract each $r\in \mathscr R^*$ one by one;  particularly,   when multiple subdivisions appear, it immediately chooses the effective subdivision $d^e$, and then uses  $d^e$ to subtract the next $r\in \mathscr R^*$, and so on; it finally gets  $u$ after all the restricted areas $r\in \mathscr R^*$ are handled.  See Figure \ref{fig:newfig:a},  $r_1$ is to be handled at first.   The subdivision below $r_1$ is taken as  $d^e$. Then, it uses $d^e$ to subtract $r_2$. Here,  the subdivision below $r_1$ and above $r_2$ is taken as  $d^e$. After this, the rest of restricted areas can be quickly pruned and thus do not need to execute (costly) geometric subtraction operations, improving the first approach. 
 
 \noindent \textbf{Why  is it inefficient?}  
 Consider  the example in Figure \ref{fig:unreachability:a} again, we can easily see that, if we want to get the uncertainty region $u$,  both of the approaches mentioned above  need to execute  subtraction operations many times. This justifies  the natural solution mentioned in the beginning of  Section \ref{subsubsub:why is fake result} is inefficient. Our strategy is to \textit{fight  poison with poison}.  In other words, we take advantage of the location uncertainty. This method  is pretty simple, but clearly efficient. The challenge is to find the \textit{point of penetration}, namely, when, where, and how to  take advantage of  the location unreachability.

 \begin{figure}[t]
   \centering
   \subfigure[\small ]{\label{fig:unreachability:a}
      \includegraphics[scale=.45]{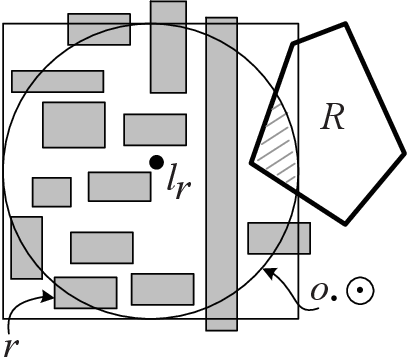}} 
      \hspace{2ex}
   \subfigure[\small ]{\label{fig:unreachability:b}
       \includegraphics[scale=.45]{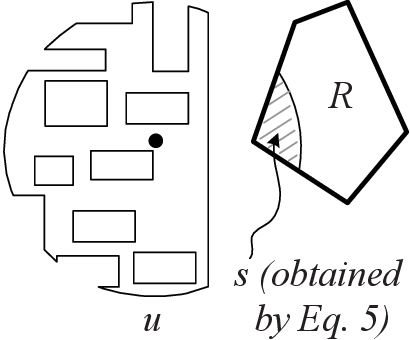}} 
             \vspace{-2ex} 
  \caption{\small Illustration of the fake result} 
  \label{fig:unreachability}
 \end{figure}

 \subsubsection{Take advantage of the location unreachability}\label{subsec:location unreacha}
 Based on the definition of \textit{subdivision}, the nature of \textit{location unreachability}, and  Equation (\ref{equation:swapping compuation}), we can  build the following theorem. 
 
 \begin{theorem}\label{theorem:always correct}
 Given $o.\odot$ and  $\mathscr R^*$,  $s$  (obtained by Equation (\ref{equation:swapping compuation}))    is  always  a correct result such that for any $r\in \mathscr R^*$, $|o.\odot-r|=1$, where  $|\cdot|$ denotes the number of subdivisions. 
 \end{theorem}
 
 Theorem 1 implies that the presence of multiple subdivisions (i.e., $|o.\odot-r|>1$) is an important sign of the  fault to be happened. Hence, if we \textit{correctly} and \textit{timely} handle this special case, the possible fault could be eliminated efficiently. With this (concept) in mind, a more efficient solution  comes into being.  Specifically, we  \textit{manage} to compute its uncertainty region $u$;  in the process of computing $u$, once {multiple} subdivisions appear, we also choose the effective subdivision $d^e$, but we do not directly use $d^e$ to subtract the next candidate restricted area.  Instead, we here  \textit{check} the geometric relation between $d^e$ and $s$ (obtained by Equation (\ref{equation:swapping compuation})).

 \begin{lemma}\label{lemma:prune 2}
 If $s\cap  d^e=\emptyset$, then $o$ can be pruned safely. 
 \end{lemma}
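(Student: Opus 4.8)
The plan is to show that the effective subdivision $d^e$ actually \emph{contains} the uncertainty region $u$, so that disjointness of $d^e$ and $s$ forces $u$ and $s$ to be disjoint, hence $p=0$. First I would recall the structure of $u$: by definition $u = o.\odot - \bigcup_{r\in\mathscr R^*} r$, and the recorded location $l_r$ always lies in $u$ (since $dist(l_{t_n},l_r)<\tau$ until an update, $l_r$ is reachable). Because the candidate restricted areas are processed in descending order of span and $d^e$ is, by definition, the subdivision of $o.\odot-r$ (for the current $r$) that contains $l_r$, the key claim is that every point of $u$ must lie in $d^e$: any point $q\in u$ is a point of $o.\odot$ not blocked by any restricted area, and since $q$ and $l_r$ are both reachable without crossing $r$ (the object moves continuously inside $u$ and $r\subseteq \mathbb{T}\setminus u$), $q$ and $l_r$ lie in the same connected component of $o.\odot - r$, namely $d^e$. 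Thus $u\subseteq d^e$.

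Given that containment, the rest is immediate: $s = (o.\odot\cap R) - \bigcup_{r\in\mathscr R^*} r$ is exactly $u\cap R$ by the computation duality (Equation~\ref{equation:swapping compuation} and the algebra $u\cap R = (o.\odot - \bigcup r)\cap R = (o.\odot\cap R) - \bigcup r$). Hence $s\subseteq u \subseteq d^e$ is false in general — wait, rather: $s = u\cap R \subseteq u \subseteq d^e$. Therefore if $s\cap d^e = \emptyset$, then $s\cap s \subseteq s\cap d^e = \emptyset$, forcing $s=\emptyset$, and so $u\cap R = \emptyset$. By Equation~\ref{equation:pj}, $p = \int_{u\cap R} f(x,y)\,dxdy = \int_{\emptyset} f = 0 < p_t$ whenever $p_t>0$; and even when $p_t=0$, $o$ contributes nothing meaningful and is conventionally pruned (one can note $0 \le p_t$ is the only borderline case and is handled by the strict-vs-nonstrict convention used for implicit queries). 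So $o$ can be pruned safely.

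The main obstacle I anticipate is making the connectivity argument rigorous: specifically, arguing that $u$ is contained in a \emph{single} connected component of $o.\odot - r$ and that this component is precisely the one flagged as $d^e$ (the one containing $l_r$). This needs the fact that $u$ itself is connected through $l_r$ — i.e., that the feasible region the object can occupy is reachable from $l_r$ without passing through any restricted area — which is the semantic content of the uncertainty-region definition (``a closed region where $o$ can always be found'') together with the assumption $l_t\notin\bigcup_{r}r$. I would state this as the crux lemma and lean on the earlier Facts (a correct $s$ corresponds to $|o.\odot - r|=1$ for all $r$) to frame why, once a genuine subdivision occurs, only the $l_r$-side matters. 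A secondary subtlety is the ordering of $\mathscr R^*$ by span: this only affects \emph{when} the subdivision is detected, not correctness, so I would remark that the lemma holds for the $d^e$ produced at the first detected split and that subsequent subtractions only shrink $u$ within $d^e$, preserving $u\subseteq d^e$.
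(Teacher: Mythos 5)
Your first paragraph contains the right key idea, and it is the same one the paper uses: establish $u\subseteq d^e$ and then conclude from $s\cap d^e=\emptyset$. Your route to the containment (a connectivity argument: $u$ is the set of locations reachable from $l_r$ without crossing a restricted area, so $u$ lies in the connected component of $o.\odot-r$ containing $l_r$, which is $d^e$ by definition) is cleaner than the paper's, which instead does an explicit case analysis on whether a second subdividing restricted area lies on the same side of $d^e$ or not, plus an induction over further splits. Your remark that the span-based ordering affects only when the split is detected, not correctness, is also a fair observation.

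However, your second paragraph has a genuine error that, as written, breaks the proof. You assert $s=u\cap R$ ``by computation duality.'' That identity holds only if $u$ were literally $o.\odot-\bigcup_{r}r$; but the whole point of Section~\ref{subsec:location unreacha} is that $u$ is the \emph{reachable} part of that difference (the effective subdivision), so in general one only has the one-way inclusion $u\cap R\subseteq s$ --- $s$ computed by Equation~\ref{equation:swapping compuation} can contain ``fake'' pieces living in unreachable components (Figure~\ref{fig:unreachability} is exactly such an example, with $s\neq\emptyset$ yet $s\cap u=\emptyset$). From your (false) equality you deduce $s\subseteq d^e$, hence that $s\cap d^e=\emptyset$ forces $s=\emptyset$. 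That conclusion is wrong and would render the lemma vacuous: the case $s=\emptyset$ is already disposed of by Corollary~\ref{Lemma:the first prune copy}, and the lemma exists precisely to prune objects with $s\neq\emptyset$ but $s$ entirely unreachable. The correct final step is short: since $u\subseteq o.\odot-\bigcup_{r\in\mathscr R^*}r$, we get $u\cap R\subseteq(o.\odot\cap R)-\bigcup_{r\in\mathscr R^*}r=s$, and since $u\cap R\subseteq u\subseteq d^e$, we get $u\cap R\subseteq s\cap d^e=\emptyset$, so $p=\int_{u\cap R}f=0$ and $o$ is pruned. With that one-line repair (and dropping the claim that $s$ itself must be empty), your argument is sound.
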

 \noindent \textbf{Proof.}
 The proof is not difficult but (somewhat) long, we move it to Appendix A. 
  {\raggedleft $\square$}

 We note that $s$ obtained by Equation (\ref{equation:swapping compuation})  possibly consists of multiple subdivisions. From Lemma \ref{lemma:prune 2}, we have an immediate corollary below. 
 \begin{corollary}\label{corollary:prune unrelated subdivisions}
 Given  $o.\odot$ and $R$, we assume $s$ (obtained by Equation (\ref{equation:swapping compuation})) consists of multiple subdivisions, say $s[1]$, $s[2]$, $\cdots$, $s{[|s|]}$, where $|s|$ is the total number of subdivisions in $s$. Without loss of generality, assume that $r\in \mathscr R^*$ can subdivide $o.\odot$ into multiple subdivisions, and $d^e$ is the effective subdivision. We have that, any subdivision $s[i]$ ($i\in [1,\cdots, |s|]$)  can be pruned safely if $s[i]\cap  d^e=\emptyset$.  
 $\square$
 \end{corollary}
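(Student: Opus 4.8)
The plan is to derive Corollary~\ref{corollary:prune unrelated subdivisions} as a direct consequence of Lemma~\ref{lemma:prune 2}, essentially by noting that each subdivision $s[i]$ of $s$ is itself a legitimate ``$s$-like'' object to which the lemma applies. First I would recall that $s = (o.\odot \cap R) - \bigcup_{r\in\mathscr R^*}r$, and that the true uncertainty region is $u = o.\odot - \bigcup_{r\in\mathscr R^*}r$, so that $u \cap R = s$ as sets; the decomposition $s = s[1]\cup\cdots\cup s[|s|]$ into disjoint closed regions therefore induces a decomposition $u\cap R = \bigcup_i s[i]$. The appearance probability is $p = \int_{u\cap R} f(x,y)\,dx\,dy = \sum_{i=1}^{|s|}\int_{s[i]} f(x,y)\,dx\,dy$, so a subdivision $s[i]$ contributes nothing to $p$ exactly when $\int_{s[i]} f = 0$, and by the support property of $f$ this happens whenever $s[i]$ has empty (measure-zero) intersection with the region where $o$ can actually be located.

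Next I would make precise what ``can be pruned safely'' means at the granularity of a single subdivision: it means $s[i]$ may be discarded from $s$ without changing the value of $p$ computed via Equation~\ref{equation:pj}. The key geometric observation, already established in the proof of Lemma~\ref{lemma:prune 2} (Appendix A), is that the part of $s$ that is genuinely reachable by $o$ lies inside the effective subdivision $d^e$: points of $o.\odot \cap R$ lying in subdivisions of $o.\odot - r$ other than $d^e$ are separated from $l_r$ by the restricted area $r$ and hence are unreachable under the distance-based update policy, contributing $0$ to the integral. Applying this to each $s[i]$ individually: if $s[i]\cap d^e = \emptyset$, then $s[i]$ is entirely contained in the unreachable region, so $\int_{s[i]} f(x,y)\,dx\,dy = 0$, and dropping $s[i]$ leaves $p$ unchanged. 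This is exactly Lemma~\ref{lemma:prune 2} applied with $s$ replaced by the single connected piece $s[i]$, which is permissible because the lemma's hypothesis and conclusion are both stated in terms of the relation between a candidate intersection region and $d^e$, and $s[i]$ inherits all the relevant structure (it is a subset of $o.\odot \cap R$ obtained after subtracting the restricted areas).

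I would then assemble the argument: by Lemma~\ref{lemma:prune 2} (in the per-piece form just described), each $s[i]$ with $s[i]\cap d^e=\emptyset$ carries zero probability mass; since the $s[i]$ are pairwise disjoint, removing all such pieces from $s$ changes neither $p$ nor the set of objects that pass the threshold test $p\geq p_t$; hence those subdivisions can be pruned safely, which is the claim.

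The main obstacle I anticipate is not the logic of the reduction but a subtlety about $d^e$ when \emph{several} candidate restricted areas subdivide $o.\odot$: the corollary is phrased for a single $r\in\mathscr R^*$ that subdivides $o.\odot$, but in general the uncertainty region $u$ is built by successively subtracting all of $\mathscr R^*$, and different restricted areas may create different effective subdivisions at different stages. I would handle this by appealing to the incremental nature of the construction described in Section~\ref{subsec:location unreacha}: the algorithm subtracts the restricted areas in descending span order, and at the first moment multiple subdivisions appear it fixes $d^e$; a piece $s[i]$ disjoint from that $d^e$ is already unreachable and stays unreachable under all subsequent subtractions (subtracting more can only remove area, never reconnect it to $l_r$), so the zero-mass conclusion is stable. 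One should also take care that ``$s[i]\cap d^e=\emptyset$'' is meant in the interior/positive-area sense, so that a shared boundary curve (measure zero) does not spoil the argument — but since $f$ is a density this boundary contributes $0$ to every integral anyway, so it is harmless.
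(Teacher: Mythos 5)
Your proposal is correct and follows essentially the same route the paper intends: the paper states this as an immediate corollary of Lemma~\ref{lemma:prune 2}, whose Appendix~A proof establishes the containment $u\subseteq d^e$, so any subdivision $s[i]$ with $s[i]\cap d^e=\emptyset$ is disjoint from $u$, carries zero probability mass under $f$, and can be dropped without changing $p$. Your additional remarks on the stability of unreachability under further subtractions and on measure-zero boundaries are sound refinements of the same argument rather than a different approach.
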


 While this method is pretty simple, we can easily see that it gains  two benefits: it not only eliminates  the possible fault produced by Equation (\ref{equation:swapping compuation}), but also prunes  some objects in the early stages, without the need of  obtaining  the final results of their uncertainty regions. See, e.g., Figure \ref{fig:unreachability:a}, $o$ can be pruned after executing (only) one subtraction operation. 


 \noindent \textbf{Discussion.}
 All mechanisms discussed before belong to  \textit{spatial} pruning/validating mechanisms. For any object $o$  that has not been pruned/validated by the above  mechanisms, the natural method is  to compute its appearance probability $p$ using  Equation (\ref{equation:pj}) or (\ref{equation:area divide area}), and then to see if $p\geq p_t$, where $p_t$ is the so-called probabilistic threshold. In the next subsection, we present a more efficient method, which is initially inspired by \textit{evolutionary algorithms} (recall Section \ref{sec:1}). We remark that   $s$ (discussed in the rest of this paper) refers to  the correct result since we already eliminated  the possible fault.

 \subsection{Threshold pruning/validating rules}\label{subsec:multiple-step tactic}
 Our   method  computes  $p$ in a multi-step rather than   one-time way. We call it the multi-step mechanism.  Briefly speaking,  we first obtain a coarse-version result (CVR), which  is possibly  far away from the accurate value of  $p$. We then make a comparison between the CVR and  $p_t$, and check  if  $o$ can be pruned based on the current information. If otherwise, we  refine the CVR by the further computation. 
 
 
 \subsubsection{{Uniform distribution PDF}}\label{subsubsec:uniform pdf multiple step}
 To apply the multi-step mechanism to the case  of uniform distribution PDF,  we need to find appropriate \textit{carriers} (or  things) to which we can apply the multi-step mechanism.

 Suppose that there is a closed region with many holes. Its exact area clearly equals that the area of the closed region subtracts the areas of all holes. In contrast, if we  compute the area of the closed region, but do not subtract the areas of holes, we shall get  the most coarse result. Furthermore, we can easily see that this coarse result can be gradually refined by subtracting the rest of holes one by one. Hence,  the holes here are taken as the \textit{carriers}.  Based on this intuition, it is not difficult  to develop the followings. 
 
 For ease of understanding the details, we first should note that the uncertainty region $u$ is a single subdivision  (possibly) with holes; and  $s$ may be multiple   subdivisions (i.e., $|s|>1$) and each subdivision (possibly) has holes. Given a closed region $c$ with a  hole $h$, we say  the boundary of $c$ is the \textit{outer ring} of $c$,   and say the boundary of $h$ is the  \textit{inner ring} of  $c$. We also  use $\alpha (\cdot)$ to denote the area of a geometry. 
 
 
 Let $u_o$ be the outer  ring of uncertainty region $u$,   $u_h^i$ be the $i$th hole in $u$, and $\mathscr H$ be the set of holes in $u$, where $|\mathscr H|\geq 0$. We have 
 \begin{equation}\label{equation:computing UR}
 \alpha(u)=\alpha(u_o)-\sum_{i=0}^{|\mathscr H|} \alpha(u_h^i)
 \end{equation} 
 Similarly,  let $s[i]$ be the $i$th subdivision of $s$,   $s[i]_o$ be the outer ring of $s[i]$,  $s[i]_h^j$ be the $j$th hole in $s[i]$, and $|s[i]_h|$ be the number of holes in $s[i]$. We have
 \begin{equation}\label{equation:computing IS}
 \alpha(s)= \sum_{i=1}^{|s|} \alpha(s[i]) =\sum_{i=1}^{|s|} {\left( \alpha(s[i]_o)-\sum_{j=0}^{|s[i]_h|} \alpha(s[i]_h^j)\right)} 
 \end{equation} 
 For ease of presentation, we let $\mathscr H^*$ denote the set of (all) holes in $s$ (note: $|\mathscr H^*|=\sum_{i=0}^{|s|}|s[i]_h|$), and renumber these holes. Specifically, we let $s_h^j$  denote the $j$th hole among all the $|\mathscr H^*|$ holes. Therefore, Equation (\ref{equation:computing IS}) can be rewritten  as follows.
 \begin{equation}\label{equation:rewrite computing IS}
 \alpha(s) =\sum_{i=1}^{|s|} { \alpha(s[i]_o)} -\sum_{j=0}^{|\mathscr H^*|} \alpha(s_h^j) 
 \end{equation} 
 
 
 The natural solution (one-time way)  is to compute $\alpha(u)$ and $\alpha(s)$ based on Equation (\ref{equation:computing UR}) and (\ref{equation:rewrite computing IS}), respectively, and then to check  if  $\frac{\alpha(s)}{\alpha(u)}\geq p_t$. 
 
 In the proposed method, we also compute  $\alpha(u)$.  We however, do not directly compute  $\alpha(s)$. Specifically, we initially compute  $\sum_{i=1}^{|s|} \alpha(s[i]_o)$. Then, we compute the first CVR, denoted by $p^0$, as follows.
 \begin{equation}\label{equation:the first version}
 p^0=\frac{\sum_{i=1}^{|s|} \alpha(s[i]_o)}{\alpha(u)}
 \end{equation}
 \begin{lemma}\label{lemma:first version prune}
 Given  $p^0$ and the probability threshold $p_t$,  $o$ can be pruned safely if  $p^0$$<p_t$.
 \end{lemma}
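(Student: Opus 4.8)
The key observation is that $p^0$ is an *upper bound* on the true probability $p$, so if this upper bound already falls below $p_t$, then certainly $p < p_t$ and $o$ fails the threshold test. I would make this precise as follows.

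**Step 1: $p^0 \geq p$.** Recall $p = \alpha(s)/\alpha(u)$ in the uniform-distribution case (Equation~\ref{equation:area divide area}, with $s = u \cap R$), and by Equation~\ref{equation:rewrite computing IS} we have $\alpha(s) = \sum_{i=1}^{|s|}\alpha(s[i]_o) - \sum_{j=0}^{|\mathscr H^*|}\alpha(s_h^j)$. Since every hole $s_h^j$ is a genuine subregion of some subdivision, each $\alpha(s_h^j) \geq 0$, and therefore $\alpha(s) \leq \sum_{i=1}^{|s|}\alpha(s[i]_o)$. Dividing by the fixed positive quantity $\alpha(u)$ gives $p = \alpha(s)/\alpha(u) \leq \sum_{i=1}^{|s|}\alpha(s[i]_o)/\alpha(u) = p^0$, using Equation~\ref{equation:the first version}.

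**Step 2: conclude.** If $p^0 < p_t$, then $p \leq p^0 < p_t$, so $o$ does not satisfy $p \geq p_t$ and can be discarded from the answer set. This is exactly the claim.

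**Step 3 (a point worth addressing): the general, non-uniform case.** The lemma as stated sits in the subsection on uniform PDFs, so strictly the argument above suffices. If one wants the statement to hold for an arbitrary PDF $f(x,y)$, the same monotonicity still works: $p = \int_{u \cap R} f \, dx\, dy = \int_{s} f \, dx\, dy$, and since $s$ is contained in the union of the outer rings $\bigcup_i s[i]_o$ (the holes only remove mass), $p \leq \int_{\bigcup_i s[i]_o} f\,dx\,dy$; but here $p^0$ as literally defined is an *area* ratio, not an integral of $f$, so the clean inequality $p \le p^0$ relies on $f = 1/\alpha(u)$. I would simply note that the lemma is stated under the uniform assumption of Section~\ref{subsubsec:uniform pdf multiple step}, and the integral-based analogue is deferred to the non-uniform subsection.

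The main obstacle here is essentially bookkeeping rather than mathematics: one must be careful that $p^0$ is genuinely an over-estimate (never an under-estimate), which hinges on the fact that discarding the holes $s_h^j$ can only *increase* the numerator, and that $\alpha(u)$ in the denominator is the correct, fully-computed value (already obtained exactly, per the text) rather than another coarse estimate. Given those two facts the inequality $p \le p^0$ is immediate, and the pruning conclusion follows at once.
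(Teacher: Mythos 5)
Your proof is correct and follows essentially the same route as the paper: both arguments reduce to observing that the total hole area $\sum_{j}\alpha(s_h^j)$ is non-negative, so $\alpha(s)\leq\sum_i\alpha(s[i]_o)$ and hence $p\leq p^0$, from which $p^0<p_t$ immediately forces $p<p_t$. Your Step 3 remark about the uniform-PDF scope is a fair observation but not needed; the paper handles the non-uniform case separately in Lemma \ref{lemma:non uniform first version prune}.
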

 \noindent \textbf{Proof.}
 We only need to show that the appearance probability  $p$ is less than $p_t$. 
 Let $\epsilon$ denote an arbitrary non-negative number. We have 
 \begin{equation}\label{formula:non-negative number}
 p^0=\frac{\sum_{i=1}^{|s|} \alpha(s[i]_o)}{\alpha(u)}\geq \frac{(\sum_{i=1}^{|s|} \alpha(s[i]_o)-\epsilon}{\alpha(u)}
 \end{equation}
 In addition, since $p=\frac{\alpha(s)}{\alpha(u)}$,   by  Equation (\ref{equation:rewrite computing IS}), we have 
 \begin{equation}\label{equation:p j}
 p=\frac{\sum_{i=1}^{|s|} { \alpha(s[i]_o)} -\sum_{j=0}^{|\mathscr H^*|} \alpha(s_h^j)}{\alpha(u)}
 \end{equation}
 Clearly,  ``$\sum_{j=0}^{|\mathscr H^*|} \alpha(s_h^j)$'' in Equation (\ref{equation:p j}) is a non-negative number.  By Formula  (\ref{formula:non-negative number}) and Equation (\ref{equation:p j}), we have $p\leq p^0$. 
 Combining the condition ``$p^0<p_t$'', hence $p<p_t$. 
  {\raggedleft $\square$}
 
 If the object $o$ cannot be pruned based on Lemma \ref{lemma:first version prune}, and there exist holes in $s$, we further compute the second CVR,  and so on.  Let $p^{k-1}$ be the $k$th CVR, where $1< k\leq |\mathscr H^*|+1$. We have 
 \begin{equation}
 \label{equation:the second version prune}
 p^{k-1}=\frac{\sum_{i=1}^{|s|} \alpha(s[i]_o)-\sum _{j=0}^{k-1}\alpha(s_h^j) }{\alpha(u)}
 \end{equation}
 We should note that $p^{k-1}=p$ when $k=|\mathscr H^*|+1$. In other words, the final CVR is equal to the appearance probability   $p$. Furthermore,  $\sum _{j=0}^{k-1}\alpha(s_h^j)\leq\sum _{j=0}^{|\mathscr H^*|}\alpha(s_h^j)$, since $1< k\leq |\mathscr H^*|+1$. Hence, from Lemma \ref{lemma:first version prune}, we have an immediate corollary below.
 \begin{corollary}\label{corollary:the second prune}
 Given the $k$th CVR $p^{k-1}$ and the probability threshold $p_t$,   $o$  can be pruned safely if  $p^{k-1}<p_t$.  {\raggedleft $\square$}
 \end{corollary}

 \noindent \textbf{Discussion.} 
 We have shown how to apply the multi-step mechanism to the case of uniform distribution PDF, and developed new pruning rules. The  small challenge is to find appropriate \textit{carriers} to which we can apply the multi-step mechanism. To apply this mechanism to the case of   non-uniform distribution PDF, there is also a small challenge, which however, is different from the previous, as we can easily find appropriate \textit{carriers} by the similar observation. To explain this small challenge, we need some preliminaries. In the next subsection, we first introduce the preliminaries,  then clarify this small challenge, and finally  give the details of our method . 
 \subsubsection{{Non-uniform distribution PDF}}\label{subsubsec:nonuniform pdf multiple step}
 Regarding to the non-uniform distribution PDF,  a classical numerical integration method  is the Monte Carlo  method \cite{jinchuanchen:efficient,yufeitao:indexing,reynoldcheng:querying}. Let $N_1$ denote a pre-set value, where $N_1$ is an integer. 
 The natural solution is to randomly generate  $N_1$ {points} in the uncertainty region $u$. For each generated point $p^\prime$, it computes the value  $f(x_i,y_i)$  based on  its PDF, where $(x_i,y_i)$ are the coordinates of the point $p^\prime$, and then to check if $p^\prime \in s$. 
 Without loss of generality,   assume that  $N_2$  points (among  $N_1$ points)  are to be located in $s$. Then 
 \begin{equation}
 \label{equation:naive method compute pj}
 p=\frac{\sum _{i=1}^{N_2}f(x_i,y_i)}{\sum _{i=1}^{N_1}f(x_i,y_i)}
 \end{equation} 
 Finally, it checks  if $p$$\geq p_t$. If so, it puts the tuple $(o,p)$ into the result. Otherwise, $o$ is to be pruned.   
 
 We should note that  the Monte Carlo method is a non-deterministic algorithm. Thus we usually use a large sample as the input, in order to assure the accuracy of computation. Here the number of { generated points}  is  the size of sample. In general,  the larger  $N_1$ is, the \textit{workload error}  is more close to 0. Without loss of generality, assume that the  allowable workload error is $\delta$,  we can get the specific value of $N_1$ by the off-line test. 
 
 To this step,  we can easily realize that the generated points can be taken as \textit{carriers} to which we can apply the multi-step mechanism. In other words, the following steps are easily brought to mind: we initially generate a \textit{small} number of points, and thus get a coarse result; then, we  refine the previous coarse result by gradually  adding  points.  A small challenge is to construct the pruning rules. In other words,  assume that we get a coarse result,  how to decide whether or not $o$ can be pruned based on the current coarse result and the probabilistic threshold $p_t$. 
 
 To alleviate the small challenge above, we take advantage of the workload error.  Henceforth,  we  can easily  determine whether or not $o$ can be pruned based on three parameters: the current coarse result,  its corresponding workload error, and the probabilistic threshold $p_t$. We remark that the workload error  can also be  estimated  by the off-line test, when we use a \textit{small} number of  points. (In our experiments, we use the \textit{maximum}  workload error. For example, assume there are 100  approximate values, say $x_a^i$, where $i\in[1,100]$, and assume the exact value is $x_e$. Then,   the maximum workload error for this single value is $\mathop{\mathrm{arg max}}  \{ |x_e-x_a^i|\}$. By the extensive off-line test, an overall maximum workload error thus can be estimated. Again, the Monte Carlo method is a non-deterministic algorithm, thus the {extensive} off-line test is needed, in order to assure the accuracy of computation.)  
 Once we get rid of this small challenge, it is not difficult to develop the followings.

 Specifically, in the proposed method, we do not directly generate $N_1$  points; instead we initially generate $\lfloor \frac{N_1}{\theta} \rfloor$   points in $u$, where $\theta$ is an integer (e.g., 10). Let $N_2^0$ be the number of points   being located in $s$, where $N_2^0\leq \lfloor \frac{N_1}{\theta} \rfloor$.    Then, we get the first CVR  $p^0$ as follows.
 \begin{equation}\label{equation:the first version non uniform}
 p^0=\frac{\sum _{i=1}^{N_2^0}f(x_i,y_i)}{\sum _{i=1}^{\lfloor \frac{N_1}{\theta} \rfloor}f(x_i,y_i)}
 \end{equation}

 Let $\delta ^0$ be the workload error when we use $\lfloor \frac{N_1}{\theta} \rfloor$  points as the input. We have
 
 \begin{lemma}\label{lemma:non uniform first version prune}
 If $p^0$ $+\delta ^0<$$p_t$, then $o$ can be pruned safely.
 \end{lemma}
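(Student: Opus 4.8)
The plan is to mirror the argument of Lemma \ref{lemma:first version prune}, replacing the exact inequality $p\le p^0$ with a probabilistic/error-bounded version built around the off-line workload error $\delta^0$. First I would recall what $\delta^0$ quantifies: when we draw only $\lfloor N_1/\theta\rfloor$ sample points instead of $N_1$, the estimate $p^0$ of Equation \ref{equation:the first version non uniform} deviates from the true appearance probability $p$ of Equation \ref{equation:naive method compute pj} by at most $\delta^0$ (in the relevant, off-line-calibrated sense), i.e.\ $|p-p^0|\le\delta^0$, hence in particular $p\le p^0+\delta^0$.

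Given that, the proof is short: assume the hypothesis $p^0+\delta^0<p_t$. By the workload-error bound we have $p\le p^0+\delta^0$, and combining this with the hypothesis yields $p<p_t$. Since $o$ qualifies for the answer only if $p\ge p_t$, the object $o$ fails the threshold test and can be discarded; no further sampling is needed. This is exactly the structure of the uniform-distribution case, with ``$\sum_{j}\alpha(s_h^j)\ge 0$'' (which gave $p\le p^0$ there) replaced by ``$p\le p^0+\delta^0$'' (which comes from the Monte Carlo error calibration here).

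The one subtlety I would be careful to state explicitly is the meaning of $\delta^0$ and why $p\le p^0+\delta^0$ holds; this is the only nontrivial ingredient, since everything else is a one-line inequality chain. I would phrase it as: by definition $\delta^0$ is the (off-line measured) upper bound on the workload error incurred when using $\lfloor N_1/\theta\rfloor$ sample points, so the true probability $p$ and the coarse estimate $p^0$ satisfy $p\le p^0+\delta^0$; therefore $p^0+\delta^0<p_t$ forces $p<p_t$, and $o$ is safely pruned. If the paper's notion of workload error is one-sided or expected-value based rather than a hard bound, I would note that the pruning is then ``safe'' only in the same (probabilistic) sense in which the Monte Carlo answer itself is accepted, which is consistent with how the baseline already treats Equation \ref{equation:naive method compute pj}. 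That caveat aside, the main obstacle is purely expository — making the dependence on the calibrated $\delta^0$ precise — rather than mathematical.
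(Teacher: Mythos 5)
Your argument is essentially identical to the paper's: the paper introduces $V_{\infty}$ (the value of Equation \ref{equation:naive method compute pj} as $N_1\rightarrow+\infty$, i.e.\ the true probability) and uses the calibrated bound $p^0-\delta^0\leq V_{\infty}\leq p^0+\delta^0$ to conclude $V_{\infty}<p_t$, which is exactly your inequality chain $p\leq p^0+\delta^0<p_t$. Your caveat about the workload error being an empirically calibrated (hence only heuristically ``safe'') bound is a fair observation that the paper itself glosses over, but it does not change the structure of the proof.
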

 \noindent \textbf{Proof.}
 Let $V_{\infty}$ be the  value obtained by Equation (\ref{equation:naive method compute pj})  when we set $N_1 \rightarrow + \infty$ (note: in this case the workload error can  be taken as  0).   It is clearly that $p^0-\delta ^0\leq V_{\infty}\leq p^0+\delta ^0$. Incorporating the condition ``$p^0$ $+\delta ^0<$$p_t$'', hence $V_{\infty}<p_t$. This completes the proof.
  {\raggedleft $\square$}
 
 If $o$ cannot be pruned based on Lemma \ref{lemma:non uniform first version prune}, we refine the first CVR by adding  points. For  the $k$th coarse-version, we denote by $\lfloor \frac{k\cdot N_1  }{\theta} \rfloor$, $\delta ^{k-1}$, and $N_2^{k-1}$ the  number of generated points, the workload error and the number of  points being located in $s$, respectively. Then, the $k$th CVR $p^{k-1}$ ($1< k\leq \theta$) can be derived as follows.
 \begin{equation}
 \label{equation:the second version prune non uniform}
 p^{k-1}=\frac{\sum _{i=1}^{N_2^{k-1}}f(x_i,y_i)}{\sum _{i=1}^{\lfloor \frac{k\cdot N_1}{\theta} \rfloor}f(x_i,y_i)}
 \end{equation}
 Furthermore, since each coarse-version corresponds to a workload error, from Lemma \ref{lemma:non uniform first version prune}, we have an immediate corollary below.
 \begin{corollary}\label{corollary:the second prune non uniform}
 Given the probability threshold $p_t$, the $k$th CVR $p^{k-1}$ and its corresponding workload error $\delta ^{k-1}$. If $p^{k-1}$ $+\delta ^{k-1}<$$p_t$, then $o$ can be pruned safely. {\raggedleft $\square$}
 \end{corollary}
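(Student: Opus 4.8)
The plan is to mirror the structure of the proof of Lemma \ref{lemma:non uniform first version prune}, since Corollary \ref{corollary:the second prune non uniform} is just that lemma re-instantiated at an arbitrary refinement level $k$ rather than at the initial level. First I would fix the object $o$ and the $k$th coarse-version computation, so that $p^{k-1}$ is given by Equation \ref{equation:the second version prune non uniform} using $\lfloor \frac{k\cdot N_1}{\theta}\rfloor$ sample points, and $\delta^{k-1}$ is the associated workload error obtained from the off-line test for that sample size. I would then introduce, exactly as in the lemma, the limiting quantity $V_{\infty}$ — the value of Equation \ref{equation:naive method compute pj} as $N_1\to+\infty$ — and note that $V_{\infty}$ is precisely the true appearance probability $p$ (the Monte Carlo estimator converges to the integral ratio in Equation \ref{equation:pj}, and the workload error vanishes in the limit).

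The key step is the sandwiching estimate $p^{k-1}-\delta^{k-1}\le V_{\infty}\le p^{k-1}+\delta^{k-1}$, which holds by the very definition of the workload error: $\delta^{k-1}$ bounds the deviation between the estimate produced with $\lfloor \frac{k\cdot N_1}{\theta}\rfloor$ points and the exact value $V_{\infty}$. Combining the right-hand side of this inequality with the hypothesis $p^{k-1}+\delta^{k-1}<p_t$ yields $V_{\infty}<p_t$, i.e. $p<p_t$; hence $o$ fails the threshold condition and can be pruned safely. This is the same two-line argument as before, so I would keep the write-up short and simply remark that the proof of Lemma \ref{lemma:non uniform first version prune} applies verbatim with $\lfloor \frac{N_1}{\theta}\rfloor$ replaced by $\lfloor \frac{k\cdot N_1}{\theta}\rfloor$, $p^0$ by $p^{k-1}$, and $\delta^0$ by $\delta^{k-1}$.

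The only real subtlety — and the part I would be most careful about — is justifying that $\delta^{k-1}$ is a legitimate two-sided bound on $|p^{k-1}-V_{\infty}|$ for this particular realization of the randomized algorithm, rather than merely an expected or high-probability bound. Since the paper has already adopted the convention (in the discussion preceding Lemma \ref{lemma:non uniform first version prune}) that the off-line test furnishes the allowable workload error $\delta$ for a given sample size, and has implicitly treated it as a deterministic guarantee in the proof of that lemma, I would lean on the same convention here without re-opening the issue. In other words, the obstacle is not mathematical depth but consistency of the error model; once one accepts the paper's notion of ``workload error'' as used in Lemma \ref{lemma:non uniform first version prune}, the corollary is immediate. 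I would therefore present the proof as a one-sentence reduction to that lemma, stating the substitution explicitly and invoking the hypothesis $p^{k-1}+\delta^{k-1}<p_t$ to conclude $p=V_{\infty}<p_t$.
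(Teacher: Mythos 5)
Your proposal matches the paper's treatment: the paper presents this corollary as an immediate consequence of Lemma \ref{lemma:non uniform first version prune}, obtained by exactly the substitution you describe ($p^0 \mapsto p^{k-1}$, $\delta^0 \mapsto \delta^{k-1}$, sample size $\lfloor N_1/\theta\rfloor \mapsto \lfloor k\cdot N_1/\theta\rfloor$), using the same sandwich $p^{k-1}-\delta^{k-1}\leq V_{\infty}\leq p^{k-1}+\delta^{k-1}$ to conclude $V_{\infty}<p_t$. Your caveat about the workload error being treated as a deterministic two-sided bound is a fair observation, but you correctly resolve it by adopting the paper's own convention, so the argument is sound and essentially identical to the paper's.
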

 
 Up to now, we have shown all our pruning/validating rules (including spatial and threshold ones), we next pull them together to answer the explicit query.
 
 \subsection{Query processing for explicit CSPTRQ}\label{subsec:query processing for EPRQ}
 
 \subsubsection{{Algorithm}}
 
 

 Let  $\Re$ be the query result. Recall that ${\mathscr R}^\prime$ be a set of restricted areas such that the MBR of each $r\in {\mathscr R}^\prime$  has non-empty intersection set   with the MBR of $o.\odot \cap R$ (cf. Section \ref{subsec:computation duality}). Furthermore, we use $u[temp]$  to denote the  intermediate result of the uncertainty region $u$ (since we \textit{manage} to compute the uncertainty region $u$, and hope some objects can be pruned in the early stages, recall Section \ref{subsec:location unreacha}); similarly, we use $p[temp]$ to denote the intermediate result of $p$ (since we adopt multi-step way to compute the appearance probability $p$, recall Section \ref{subsec:multiple-step tactic}).    
 
 We first search the set $\mathscr O^*$ of candidate moving objects on the index  ${\mathscr I}_{o}$ using  $R_b$ as the input.  We then process each object $o\in \mathscr O^*$ based on  Algorithm 1 below. Note that in the following algorithm, the clause ``Discard $o$'' denotes that  the object $o$ is to be pruned, and we shift to process the next object, without the need of executing the remaining  lines.
 

 {\vspace{1ex}
 \small  \hrule
 \vspace{0.5ex}
 
 \noindent \textbf{Algorithm 1} \textit{Explicit CSPTRQ} 
 \vspace{0.5ex}
 
 \hrule
 \vspace{0.5ex}
 }

     {\footnotesize

 \noindent(1)$~~~$\textbf{if} $o.\odot \subseteq R$ 
 
 \noindent(2)$~~~$$~~~$Set $p\leftarrow 1$, and let $\Re\leftarrow\Re \cup $$(o,p)$ // $o$ be validated, Lemma \ref{Lemma:the first prune}

 \noindent(3)$~~~$\textbf{else if} $o.\odot \cap$$R=\emptyset$
 
 \noindent(4)$~~~$$~~~$Discard $o$ // $o$ be pruned,  Lemma \ref{Lemma:the first prune} 
 
 \noindent(5)$~~~$\textbf{else} // $o.\odot \cap$ $R\neq \emptyset$
 
 \noindent(6)$~~~$$~~~$Obtain ${\mathscr R}^\prime$  by searching on ${\mathscr I}_{r}$, and set $s\leftarrow$ ($o.\odot \cap R$)$-\cup _{r\in \mathscr R^\prime} r$

 \noindent(7)$~~~$$~~~$\textbf{if} $s=\emptyset$
 
 \noindent(8)$~~~$$~~~$$~~~~$Discard $o$ //  $o$ be pruned, Corollary \ref{Lemma:the first prune copy}
 
 \noindent(9)$~~~$$~~~$\textbf{else} // $s\neq \emptyset$ 
 
 \noindent(10)$~~~$$~~~$$~~$Obtain $\mathscr R^*$  by searching on ${\mathscr I}_{r}$,  set $u[temp]\leftarrow o.\odot$, and 
 
 $~~~$$~~~$$~~~~$sort all the restricted area $r$$\in \mathscr R^*$ according to their spans
 
 \noindent(11)$~~$$~~~$$~~~$\textbf{for} each $r$ $\in \mathscr R^*$
 
 \noindent(12)$~~$$~~~$$~~~$$~~~$Let $u[temp]\leftarrow u[temp]-r$
 
 \noindent(13)$~~$$~~~$$~~~$$~~~$\textbf{if} $|u[temp]|>1$ // multiple subdivisions appear
 
 \noindent(14)$~~$$~~~$$~~~$$~~~$$~~~$$u[temp]\leftarrow$Choose the effective subdivision

 \noindent(15)$~~$$~~~$$~~~$$~~~$\textbf{if} $u[temp]$ and $s$ are disjoint 
 
 \noindent(16)$~~$$~~~$$~~~$$~~~$$~~~$Discard $o$ //  $o$ be pruned, Lemma \ref{lemma:prune 2}

 \noindent(17)$~~$$~~~$$~~~$$~~~$\textbf{for} each $s[i]$ // $s[i]$ is a subdivision of $s$
 
 \noindent(18)$~~$$~~~$$~~~$$~~~$$~~~$\textbf{if} $u[temp]\cap s[i]=\emptyset$

 \noindent(19)$~~$$~~~$$~~~$$~~~$$~~~$$~~~$Remove $s[i]$ from $s$ // Corollary \ref{corollary:prune unrelated subdivisions}
 
 \noindent(20)$~~$$~~~$$~~~$Set $u\leftarrow u[temp]$
 
 \noindent(21)$~~$$~~~$$~~~$$p[temp]\leftarrow$ Compute the first CVR // Eq. \ref{equation:the first version} (or  \ref{equation:the first version non uniform})
 
 \noindent(22)$~~$$~~~$$~~~$\textbf{if} $p[temp]<$$p_t$ (or  $p[temp]+\delta ^0<p_t$)
 
 \noindent(23)$~~$$~~~$$~~~$$~~~$Discard $o$ // $o$ be pruned, Lemma \ref{lemma:first version prune} (or \ref{lemma:non uniform first version prune})
 
 \noindent(24)$~~$$~~~$$~~~$\textbf{else} 
 
 \noindent(25)$~~$$~~~$$~~~$$~~~$\textbf{while} $p[temp]$ is not the final CVR
 
 \noindent(26)$~~$$~~~$$~~~$$~~~$$~~~$$p[temp]\leftarrow$Compute the next CVR //Eq. \ref{equation:the second version prune} (or  \ref{equation:the second version prune non uniform})
 
 \noindent(27)$~~$$~~~$$~~~$$~~~$$~~~$\textbf{if} $p[temp]<p_t$ (or $p[temp]+\delta ^{k-1}<p_t$)
 
 \noindent(28)$~~$$~~~$$~~~$$~~~$$~~~$$~~~$Discard $o$ // $o$ be pruned, Corollary  \ref{corollary:the second prune} (or \ref{corollary:the second prune non uniform})
 
 \noindent(29)$~~$$~~~$$~~~$$~~~$Set $p\leftarrow p[temp]$, and let $\Re \leftarrow \Re \cup (o,p)$ // $o$ cannot be pruned by all the rules

 \hrule
 \vspace{1ex}
         }

 \vspace{1ex}
 



 We remark that we overlook the cost such as adding a tuple $(o, p)$ into $\Re$, comparing the geometric relation between two entities, etc., as these costs are trivial. Moreover, the span is a real number, hence the overhead to sort $|\mathscr R^*|$ candidate restricted areas is pretty small and  can (almost) be overlooked  compared to the overhead to execute $O(|\mathscr R^*|)$ times geometric subtraction operations. 
 In the sequel, we show how to extend techniques proposed in this section to answer the implicit query.
 \section{Implicit CSPTRQ}\label{sec:iprq}
 We first introduce the \textit{enhanced} multi-step computation, and then integrate the techniques proposed in  Section \ref{subsec:computation duality} to answer the implicit query. The enhance multi-step computation is easily brought to mind, as we have discussed the multi-step computation in the previous section, and we can easily see that the implicit query does not need to return the appearance probabilities of qualified objects, implying that some threshold validating rules can be developed. Note that the performance differences between the explicit and implicit queries stem mainly from this step.

 \subsection{Enhanced multi-step computation}\label{subsec:tactic1}
 
 The enhance multi-step strategy includes (\romannumeral 1) an \textit{adaptive} pruning/validating mechanism, which is used for the uniform distribution case,  and (\romannumeral 2) a \textit{two-way test} mechanism,  which is used  for the non-uniform distribution case. Regarding to the two-way test mechanism, there is no much surprise. Regarding to the first mechanism, its central idea is to cleverly choose appropriate rule (or method) according to the specific case. A small challenge  can be generally described as follows: given two methods and a specific case,  how to determine which method is more suitable for this specific case? In the sequel, we discuss more details. (Remark: most of notations discussed later actually have already been defined in previous sections, if any question, please refer to Table \ref{tab:main symbols} and/or Section \ref{subsec:multiple-step tactic}.)
 
 \subsubsection{{Adaptive pruning/validating mechanism}}\label{subsubsec:adaptive pruning and validate}
 Recall the tactic discussed in Section \ref{subsubsec:uniform pdf multiple step}. For the first coarse-version result (CVR), it is to compute  $\alpha(u)$ and  $\sum_{i=1}^{|s|} \alpha(s[i]_o)$ at first, and then to compute the first CVR $p^0$ based on  Equation (\ref{equation:the first version}). Since the implicit query does not need to explicitly return the probabilities of the qualified objects, clearly, it is also feasible that we  first  compute  $\alpha(s)$ and  $\alpha(u_o)$, and then  compute the first CVR $p^0$ as follows.
 \begin{equation}\label{equation:the frist version validate iprq}
 p^0=\frac{\alpha(s)}{\alpha(u_o)}
 \end{equation}
 \begin{lemma}\label{lemma:first version validate iprq}
 Given the probability threshold $p_t$ and the first CVR $p^0$ (obtained by Equation (\ref{equation:the frist version validate iprq})), we have that   if the first CVR $p^0$$>p_t$, then  $o$  can be validated safely.
 \end{lemma}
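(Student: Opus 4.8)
\textbf{Proof proposal for Lemma \ref{lemma:first version validate iprq}.}

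The plan is to show that $p^0$ (obtained by Equation \ref{equation:the frist version validate iprq}) is a \emph{lower bound} on the true appearance probability $p$, so that $p^0 > p_t$ forces $p > p_t$, which is exactly the condition under which $o$ must be validated for the ICSPTRQ. This is the mirror image of the argument for Lemma \ref{lemma:first version prune}: there the numerator was inflated (outer rings only, no holes of $s$) and the denominator was exact, giving an upper bound $p^0 \geq p$; here the numerator $\alpha(s)$ is exact while the denominator $\alpha(u_o)$ is inflated (outer ring of $u$, ignoring the holes of $u$), so the ratio can only \emph{under}estimate $p$.

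First I would recall from Equation \ref{equation:computing UR} that $\alpha(u) = \alpha(u_o) - \sum_{i=0}^{|\mathscr H|}\alpha(u_h^i)$, and that each hole area $\alpha(u_h^i)$ is non-negative, so $\alpha(u) \leq \alpha(u_o)$. Next, since the uniform-distribution assumption gives $p = \frac{\alpha(s)}{\alpha(u)}$ by Equation \ref{equation:area divide area} (with $s = u \cap R$, the correct intersection result after the location-unreachability cleanup of Section \ref{subsec:location unreacha}), and since $\alpha(s) \geq 0$, dividing the fixed non-negative quantity $\alpha(s)$ by the larger denominator $\alpha(u_o)$ yields
\begin{equation}
p^0 = \frac{\alpha(s)}{\alpha(u_o)} \leq \frac{\alpha(s)}{\alpha(u)} = p.
\end{equation}
Then, combining this with the hypothesis $p^0 > p_t$ gives $p \geq p^0 > p_t$, i.e. $p > p_t$; hence $o$ satisfies the ICSPTRQ predicate and can be validated safely. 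I would also note explicitly that $\alpha(u_o) > 0$ (the uncertainty region is a genuine closed region with positive area), so the division is well defined and the case $\alpha(u_o) = 0$ does not arise.

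The only mildly delicate point — and the step I would be most careful about — is making sure the monotonicity direction is stated for the right object: we must divide by $\alpha(u_o)$, the outer ring of $u$, not by the outer ring of $s$, and we must keep $\alpha(s)$ as the \emph{exact} area of the (already corrected) intersection result rather than a coarsened version of it; otherwise the inequality $p^0 \leq p$ could fail. Since $\alpha(s)$ is computed exactly here (via Equation \ref{equation:rewrite computing IS}) and the bound on the denominator follows immediately from non-negativity of hole areas, there is no real obstacle, only bookkeeping. Thus the proof is short: one inequality on the denominator, one substitution, and the conclusion.
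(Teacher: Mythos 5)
Your proof is correct and is essentially the argument the paper intends: the paper's own proof merely states that it is "similar" to that of Lemma \ref{lemma:first version prune}, and what you have written out — $\alpha(u)\leq\alpha(u_o)$ because hole areas are non-negative, hence $p^0=\alpha(s)/\alpha(u_o)\leq\alpha(s)/\alpha(u)=p$, so $p^0>p_t$ forces $p>p_t$ — is precisely that mirrored bound made explicit. No gaps; your added remarks about $\alpha(u_o)>0$ and using the exact $\alpha(s)$ are sensible bookkeeping.
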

 \noindent \textbf{Proof.}
 We only need to show  $p$$>p_t$. The proof is the similar as the one of Lemma \ref{lemma:first version prune}. 
  {\raggedleft $\square$}
 
 If $o$ cannot be validated based on Lemma \ref{lemma:first version validate iprq}, and the number of holes in $u$ is not equal to 0 (i.e., $|\mathscr H|\neq 0$), we further compute the second CVR,  and so on. Then, the $k$th CVR $p^{k-1}$ ($1< k\leq |\mathscr H|+1$) can be derived as follows. 
 \begin{equation}
 \label{equation:the second version validate iprq}
 p^{k-1}=\frac{\alpha(s)}{\alpha(u_o)-\sum _{i=0}^{k-1}\alpha(u_h^i) }
 \end{equation}
 Note that $p^{k-1}$ equals the appearance probability $p$ when $k=|\mathscr H|+1$. Furthermore,  $\sum _{i=0}^{k-1}\alpha(u_h^i)$ $\leq$ $\sum _{i=0}^{|\mathscr H|}\alpha(u_h^i)$, since $1< k\leq |\mathscr H|+1$. Hence, from Lemma \ref{lemma:first version validate iprq}, we have an immediate corollary below.
 \begin{corollary}\label{corollary:the second validate iprq}
 Given the probability threshold $p_t$ and the $k$th  CVR $p^{k-1}$ (obtained by Equation (\ref{equation:the second version validate iprq})),   $o$ can be validated safely, if  $p^{k-1}>p_t$. {\raggedleft $\square$}
 \end{corollary}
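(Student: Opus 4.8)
The plan is to reduce Corollary~\ref{corollary:the second validate iprq} to Lemma~\ref{lemma:first version validate iprq} by showing that the $k$th CVR $p^{k-1}$ obtained from Equation~\ref{equation:the second version validate iprq} is a lower bound on the true appearance probability $p$, exactly as $p^0$ from Equation~\ref{equation:the frist version validate iprq} was. Once that monotone-lower-bound property is in hand, the implication ``$p^{k-1}>p_t \Rightarrow p>p_t$'' is immediate, and then $o$ can be validated safely by the semantics of the ICSPTRQ (Definition for ICSPTRQ), which only requires $p\geq p_t$ and never the specific value of $p$.

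The key step is to observe, from Equation~\ref{equation:computing UR}, that $\alpha(u)=\alpha(u_o)-\sum_{i=0}^{|\mathscr H|}\alpha(u_h^i)$, so that for any $k$ with $1<k\leq|\mathscr H|+1$ the partial sum of hole areas satisfies $\sum_{i=0}^{k-1}\alpha(u_h^i)\leq\sum_{i=0}^{|\mathscr H|}\alpha(u_h^i)$ (each $\alpha(u_h^i)$ is nonnegative), hence
\begin{equation}\label{eq:denom-ineq}
\alpha(u_o)-\sum_{i=0}^{k-1}\alpha(u_h^i)\;\geq\;\alpha(u_o)-\sum_{i=0}^{|\mathscr H|}\alpha(u_h^i)\;=\;\alpha(u)\;>\;0.
\end{equation}
Since $\alpha(s)=\alpha(u\cap R)\geq 0$ is fixed, dividing $\alpha(s)$ by the larger denominator in \eqref{eq:denom-ineq} yields a smaller quotient, i.e.
\begin{equation}\label{eq:pk-le-p}
p^{k-1}=\frac{\alpha(s)}{\alpha(u_o)-\sum_{i=0}^{k-1}\alpha(u_h^i)}\;\leq\;\frac{\alpha(s)}{\alpha(u)}=p .
\end{equation}
Combining \eqref{eq:pk-le-p} with the hypothesis $p^{k-1}>p_t$ gives $p>p_t$, which establishes the corollary. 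The remark that $p^{k-1}=p$ when $k=|\mathscr H|+1$ is then just the equality case of \eqref{eq:denom-ineq} and serves as a sanity check that the chain of CVRs terminates at the exact probability.

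I do not expect a genuine obstacle here; the statement is essentially a monotonicity argument of the same flavor as Lemma~\ref{lemma:first version prune} and Lemma~\ref{lemma:first version validate iprq}, and the excerpt already signals this by noting ``the proof is the similar as the one of Lemma~\ref{lemma:first version prune}.'' The only point that warrants care is the well-definedness of the quotient: one must confirm the denominator stays strictly positive for every admissible $k$, which is exactly what the last inequality in \eqref{eq:denom-ineq} guarantees, given $\alpha(u)>0$ (the uncertainty region $u$ is a nonempty closed region of positive area whenever $o$ has survived the earlier pruning stages). A secondary bookkeeping detail is that the holes $u_h^0,\dots,u_h^{|\mathscr H|}$ must be the same enumeration used in Equation~\ref{equation:computing UR}, so that ``removing the first $k-1$ holes'' is a consistent partial operation; this is purely notational and follows the convention fixed in Section~\ref{subsubsec:uniform pdf multiple step}.
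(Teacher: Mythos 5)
Your proof is correct and follows the same route as the paper: the paper derives the corollary from Lemma \ref{lemma:first version validate iprq} via the observation that $\sum_{i=0}^{k-1}\alpha(u_h^i)\leq\sum_{i=0}^{|\mathscr H|}\alpha(u_h^i)$, which is exactly your denominator-monotonicity argument showing $p^{k-1}\leq p$. You simply spell out the details (including the positivity of the denominator) that the paper leaves implicit in calling the corollary ``immediate.''
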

 
 Hence, we can easily see that, if an object $o$ cannot be pruned/validated based on the spatial information, then  there are two methods to handle it.  
 \begin{itemize*}
 \item Method 1:  We compute the CVRs according to Equation (\ref{equation:the first version}) or (\ref{equation:the second version prune}), and then check if $o$ can be \textit{pruned} based on Lemma \ref{lemma:first version prune} or Corollary \ref{corollary:the second prune}.
 \item Method 2:  We compute the CVRs according to Equation (\ref{equation:the frist version validate iprq}) or  (\ref{equation:the second version validate iprq}), and then check if $o$ can be \textit{validated} based on Lemma \ref{lemma:first version validate iprq} or Corollary \ref{corollary:the second validate iprq}. 
 \end{itemize*}
 
 The naive solution is always to use  one of the two methods to handle those candidate moving objects that cannot be pruned/validated by the spatial information.  Instead, we adopt an \textit{adaptive} pruning/validating mechanism. In brief, if $o$ is more likely to be pruned, we use the ``Method 1''; in contrast, if $o$ is more likely to be validated, we use the ``Method 2''. Note that, there is a question ``given an object $o$, how to know it is more likely to be pruned (or validated)?''
 
 Specifically, we   compute a \textit{reference value}, which is used to estimate the \textit{trend} of $o$ (being more likely to be pruned/validated). Let $\gamma$  denote the reference value, which is computed  as follows.
 \begin{equation}\label{equation:referernce value}
 \gamma=\frac{\sum_{i=1}^{|s|} \alpha(s[i]_o)}{\alpha(u_o)}
 \end{equation}
 
 
 \begin{heuristic}
 Given $\gamma$ and $p_t$, if $\gamma<p_t$, then $o$ is more likely to be pruned. Otherwise, $o$ is more likely to be validated.
 \end{heuristic}
 
   
 Algorithm 2 below shows the pseudo codes of the adaptive pruning/validating mechanism. Lines 2-10 focus on pruning objects, and Lines 12-20 focus on validating objects. (Note that the meanings of the notations used in this algorithm are the same as the ones in Algorithm 1.)  
 
 \vspace{1ex}
 {\vspace{1ex}
 \small  \hrule
 \vspace{0.5ex}

 \noindent \textbf{Algorithm 2} \textit{Adaptive pruning/validating mechanism} 
 \vspace{0.5ex}
 
 \hrule
 \vspace{0.5ex}
 }
 
     {\footnotesize

 \noindent(1)$~~~~$$\gamma \leftarrow$Compute the \textit{reference value} // Equation (\ref{equation:referernce value})
 
 \noindent(2)$~~~~$\textbf{if} $\gamma<p_t$
 
 \noindent(3)$~~~~$$~~~~$$p[temp]\leftarrow$Compute the first CVR // Equation (\ref{equation:the first version})
 
 \noindent(4)$~~~~$$~~~~$\textbf{if} $p[temp]<$$p_t$
 
 \noindent(5)$~~~~$$~~~~$$~~~~$Discard $o$ // $o$ be pruned, Lemma \ref{lemma:first version prune}
 
 \noindent(6)$~~~~$$~~~~$\textbf{else}
 
 \noindent(7)$~~~~$$~~~~$$~~~~$\textbf{while} $p[temp]$ is not the final CVR
 
 \noindent(8)$~~~~$$~~~~$$~~~~$$~~~~$$p[temp]\leftarrow$Compute the next CVR // Equation (\ref{equation:the second version prune}) 
 
 \noindent(9)$~~~~$$~~~~$$~~~~$$~~~~$\textbf{if} $p[temp]<p_t$
 
 \noindent(10)$~~$$~~~~$$~~~~$$~~~~$$~~~~$Discard $o$ // $o$ be pruned, Corollary  \ref{corollary:the second prune} 
 
 \noindent(11)$~~$$~~~~$$~~~~$Let $\Re \leftarrow \Re \cup o$ // $o$ is a qualified object

 \noindent(12)$~~$\textbf{else} // $\gamma\geq p_t$
 
 \noindent(13)$~~$$~~~~$$p[temp]\leftarrow$Compute the first CVR // Equation (\ref{equation:the frist version validate iprq})
 
 \noindent(14)$~~$$~~~~$\textbf{if} $p[temp]\geq p_t$
 
 \noindent(15)$~~$$~~~~$$~~~~$Let $\Re \leftarrow \Re  \cup o$ // $o$ be validated, Lemma \ref{lemma:first version validate iprq}
 
 \noindent(16)$~~$$~~~~$\textbf{else}
 
 \noindent(17)$~~$$~~~~$$~~~~$\textbf{while} $p[temp]$ is not the final CVR
 
 \noindent(18)$~~$$~~~~$$~~~~$$~~~~$$p[temp]\leftarrow$Compute the next CVR // Equation (\ref{equation:the second version validate iprq})
 
 \noindent(19)$~~$$~~~~$$~~~~$$~~~~$\textbf{if} $p[temp]\geq p_t$
 
 \noindent(20)$~~$$~~~~$$~~~~$$~~~~$$~~~~$Let $\Re \leftarrow \Re  \cup o$ // $o$ be validated, Corollary  \ref{corollary:the second validate iprq} 
 
 \noindent(21)$~~$$~~~~$$~~~~$Discard $o$ // $o$ is  an unqualified object
 
 \hrule
 \vspace{1ex}
         }
 
 \vspace{1ex}
 \subsubsection{{Two-way test mechanism}}
 The two-way test mechanism is a simple extension of the method in Section \ref{subsec:multiple-step tactic}. For the sake of completeness, we present it below. 
 
 Regarding to the  first CVR, we can  also  compute it  according to Equation (\ref{equation:the first version non uniform}). Then, we have
 
 \begin{lemma}\label{lemma:two way the first version}
 Given the probability threshold $p_t$, the first CVR $p^0$ and its corresponding workload error $\delta ^0$, we have
 \begin{itemize*}
 \item If ``$p^0$ $+\delta ^0<p_t$'', then $o$ can be pruned safely.
 \item If ``$p^0$ $-\delta ^0\geq p_t$'', then $o$ can be validated safely. 
 \end{itemize*}
 \end{lemma}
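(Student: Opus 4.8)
The plan is to mimic the argument used for Lemma \ref{lemma:non uniform first version prune}, but to carry it out in both directions. Let $V_{\infty}$ denote the limiting value obtained from Equation \ref{equation:naive method compute pj} as $N_1\rightarrow +\infty$; as noted in the proof of Lemma \ref{lemma:non uniform first version prune}, the workload error in that limit can be taken as $0$, so $V_{\infty}$ coincides with the true appearance probability $p$ given by Equation \ref{equation:pj}. Hence it suffices to show that $p<p_t$ in the first case and $p\geq p_t$ in the second case, since membership of $o$ in the answer of the ICSPTRQ is decided by comparing $p$ against $p_t$.

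First I would recall the definition of the workload error: when $\lfloor N_1/\theta\rfloor$ sample points are used, the resulting estimate $p^0$ (Equation \ref{equation:the first version non uniform}) deviates from the limiting value by at most $\delta^0$, i.e. $p^0-\delta^0\leq V_{\infty}\leq p^0+\delta^0$, equivalently $|p^0-p|\leq\delta^0$. This is exactly the sandwich inequality already invoked in the proof of Lemma \ref{lemma:non uniform first version prune}, so I would simply cite it. From here the two bullets follow by elementary arithmetic: if $p^0+\delta^0<p_t$, then $p\leq p^0+\delta^0<p_t$, so $o$ is unqualified and can be pruned; if $p^0-\delta^0\geq p_t$, then $p\geq p^0-\delta^0\geq p_t$, so $o$ is qualified and can be validated. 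In each case the conclusion matches the action claimed in the lemma.

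The only genuinely delicate point — and the one I would flag as the main obstacle — is the precise meaning and validity of the two-sided bound $|p^0-p|\leq\delta^0$. The Monte Carlo estimator is random, so strictly speaking this bound holds only in a probabilistic/expected sense (or up to a confidence level), and $\delta^0$ must be interpreted as the offline-calibrated bound on the (typical) deviation rather than a deterministic error bound. The paper has already adopted this convention when stating and proving Lemma \ref{lemma:non uniform first version prune}, so I would keep the same level of rigor: treat $\delta^0$ as the quantity such that $V_{\infty}$ is guaranteed (in the paper's sense) to lie in $[p^0-\delta^0,\,p^0+\delta^0]$, and derive both directions from that. No new machinery is needed beyond what Lemma \ref{lemma:non uniform first version prune} already uses; the contribution of this lemma is solely that the symmetric error band admits a validation test as well as a pruning test, which is exploitable precisely because the ICSPTRQ does not require the exact value of $p$.

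Finally, I would remark that the argument extends verbatim to later coarse-version steps (replacing $p^0,\delta^0$ by $p^{k-1},\delta^{k-1}$), giving the analogue of Corollary \ref{corollary:the second prune non uniform} with an added validation clause; I expect the paper to state this as an immediate corollary rather than re-prove it.
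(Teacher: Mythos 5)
Your proposal is correct and matches the paper's argument: the paper proves this lemma by simply extending the proof of Lemma \ref{lemma:non uniform first version prune}, i.e., using the same sandwich bound $p^0-\delta^0\leq V_{\infty}\leq p^0+\delta^0$ in both directions, exactly as you do. Your additional remarks on the probabilistic interpretation of $\delta^0$ and the corollary for later steps are consistent with how the paper treats these points.
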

 \noindent \textbf{Proof.}
 It is immediate by extending the proof of   Lemma \ref{lemma:non uniform first version prune}.
  {\raggedleft $\square$}
 
 If $o$ can be neither  pruned nor validated based on Lemma \ref{lemma:two way the first version}, we further compute  the second CVR, and so on. For the $k$th CVR, we can also compute it according to Equation (\ref{equation:the second version prune non uniform}).  From Lemma \ref{lemma:two way the first version}, we have an immediate corollary below.
 \begin{corollary}\label{corollary:two way the second}
 Given the probability threshold $p_t$, the $k$th CVR $p^{k-1}$ and its corresponding workload error $\delta ^{k-1}$, we have 
 \begin{itemize*}
 \item If ``$p^{k-1}$ $+\delta ^{k-1}<p_t$'', then $o$ can be pruned safely.
 \item If ``$p^{k-1}-$$\delta ^{k-1}\geq p_t$'', then $o$ can be validated safely. {\raggedleft $\square$}
 \end{itemize*}
 
 \end{corollary}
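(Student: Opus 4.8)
\textbf{Proof plan for Corollary \ref{corollary:two way the second}.}

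The plan is to mirror the structure of the proof of Lemma \ref{lemma:two way the first version} and simply track the fact that each coarse-version carries its own workload error. First I would recall the quantity $V_{\infty}$ from the proof of Lemma \ref{lemma:non uniform first version prune}: the value obtained by Equation \ref{equation:naive method compute pj} as $N_1 \to +\infty$, at which point the workload error vanishes and $V_{\infty}$ coincides with the true appearance probability $p$ (up to the fact that the Monte Carlo estimator converges to the integral in Equation \ref{equation:pj}). The essential observation is that when $\lfloor \tfrac{k\cdot N_1}{\theta}\rfloor$ sample points are used, the off-line test guarantees $|p^{k-1} - V_{\infty}| \leq \delta^{k-1}$, i.e. $p^{k-1} - \delta^{k-1} \leq V_{\infty} \leq p^{k-1} + \delta^{k-1}$. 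This is exactly the same containment used for the first CVR, now applied at level $k$.

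From that two-sided bound the two claims follow immediately. If $p^{k-1} + \delta^{k-1} < p_t$, then $V_{\infty} \leq p^{k-1} + \delta^{k-1} < p_t$, so $p < p_t$ and $o$ is correctly pruned. Symmetrically, if $p^{k-1} - \delta^{k-1} \geq p_t$, then $p = V_{\infty} \geq p^{k-1} - \delta^{k-1} \geq p_t$, so $o$ satisfies the threshold and is correctly validated (and since ICSPTRQ needs only membership, not the value of $p$, no further refinement is required). I would note in passing that the two conditions are mutually exclusive whenever $\delta^{k-1} \geq 0$, so no inconsistency can arise, and that this is precisely where the ``two-way'' character of the test is exploited — one inequality prunes, the other validates, and the gap between them shrinks as $k$ grows because $\delta^{k-1}$ decreases with the sample size.

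There is essentially no obstacle here; the only mildly delicate point is the implicit assumption that the workload-error guarantee obtained off-line for $\lfloor \tfrac{k\cdot N_1}{\theta}\rfloor$ points actually bounds the deviation of $p^{k-1}$ from the limiting value $V_{\infty}$ (as opposed to bounding the deviation from $p$ directly). Since $V_{\infty}$ is by construction the large-sample limit of the same estimator and equals $p$, the two formulations coincide, and the corollary reduces to a one-line application of Lemma \ref{lemma:two way the first version} with the first CVR replaced by the $k$th CVR and $\delta^0$ replaced by $\delta^{k-1}$.
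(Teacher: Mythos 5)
Your proposal is correct and follows essentially the same route as the paper: the paper presents this corollary as an immediate consequence of Lemma \ref{lemma:two way the first version}, whose proof extends that of Lemma \ref{lemma:non uniform first version prune} via the two-sided containment $p^{k-1}-\delta^{k-1}\leq V_{\infty}\leq p^{k-1}+\delta^{k-1}$, exactly as you do. Your added remarks on mutual exclusivity of the two conditions and on what $\delta^{k-1}$ actually bounds are sensible but not needed beyond what the paper's one-line argument already gives.
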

 
 The pseudo codes of the two-way test mechanism are shown in  Algorithm 3. We remark that in  the two-way test mechanism,  if  $o$  cannot (still) be pruned/validated by the final CVR,  we take the object $o$ as a qualified object, since  the final CVR equals $p$, and $p\in[p-\delta,p+\delta]$, where $\delta$ is the allowable workload error.

 \vspace{-1ex}

 \subsection{Query processing for implicit CSPTRQ}\label{subsec:query processing for IPRQ}
 
 \noindent \textbf{Algorithm.}
 The spatial pruning/validating mechanisms  proposed in Section \ref{subsec:computation duality}  can be seamlessly incorporated for answering the implicit query, implying that the algorithm for the implicit query  is the similar as the one for the explicit query. Specifically, we need to replace  Line 2 and Lines 21-29 in Algorithm 1 with new pseudo codes. Clearly, Line 2  should be replaced by ``${\Re\leftarrow \Re \cup o}$'', and Lines 21-29 should be replaced by the pseudo codes of the \textit{enhanced multi-step computation},  i.e., Algorithms 2 and 3.

 \noindent \textbf{I/O and query cost.}
 The I/O cost  is the same as the one of Algorithm 1. The query cost can be estimated using the similar method presented in Section \ref{subsec:query processing for EPRQ}. Specifically, the $i$ in Equation (\ref{equation:query cost}) should be replaced with a more small value, since the enhanced multi-step mechanism not only prunes but also validates objects.

 {\vspace{3ex}
 \small \hrule
 \vspace{0.5ex}

 \noindent \textbf{Algorithm 3} \textit{Two-way test mechanism} 
 \vspace{0.5ex}
 
 \hrule
 \vspace{0.5ex}
 }

     {\footnotesize

 \noindent(1)$~~~~$$p[temp]\leftarrow$ compute the first CVR // Equation  (\ref{equation:the first version non uniform})
 
 \noindent(2)$~~~~$\textbf{if} $p[temp]+\delta ^{0}<p_t$ $\parallel$ $p[temp]-\delta ^{0}\geq p_t$
 
 \noindent(3)$~~~~$$~~~~$\textbf{if} $p[temp]+\delta ^{0}<p_t$
 
 \noindent(4)$~~~~$$~~~~$$~~~~$Discard $o$ // $o$ be pruned, Lemma   \ref{lemma:two way the first version}
 
 \noindent(5)$~~~~$$~~~~$\textbf{else} // $p[temp]-\delta ^{0}\geq p_t$
 
 \noindent(6)$~~~~$$~~~~$$~~~~$$\Re \leftarrow \Re  \cup o$ // $o$ be validated, Lemma   \ref{lemma:two way the first version}

 \noindent(7)$~~~~$\textbf{else} 
 
 \noindent(8)$~~~~$$~~~~$\textbf{while} $p[temp]$ is not the final CVR
 
 \noindent(9)$~~~~$$~~~~$$~~~~$$p[temp]\leftarrow$Compute the next CVR // Equation (\ref{equation:the second version prune non uniform})
 
 \noindent(10)$~~$$~~~~$$~~~~$\textbf{if} $p[temp]+\delta ^{k-1}<p_t$ $\parallel$ $p[temp]-\delta ^{k-1}\geq p_t$
 
 \noindent(11)$~~$$~~~~$$~~~~$$~~~~$\textbf{if} $p[temp]+\delta ^{k-1}<p_t$
 
 \noindent(12)$~~$$~~~~$$~~~~$$~~~~$$~~~~$Discard $o$ // $o$ be pruned, Corollary   \ref{corollary:two way the second}
 
 \noindent(13)$~~$$~~~~$$~~~~$$~~~~$\textbf{else} // $p[temp]-\delta ^{k-1}\geq p_t$
 
 \noindent(14)$~~$$~~~~$$~~~~$$~~~~$$~~~~$$\Re \leftarrow \Re  \cup o$ // $o$ be validated, Corollary   \ref{corollary:two way the second}
 
 \noindent(15)$~~$$~~~~$$\Re \leftarrow \Re \cup o$

 \hrule
 \vspace{1ex}
         }
 
 \vspace{1ex}

 \section{Further optimization} \label{sec:further optimziation}
 In previous sections,  for each candidate moving object $o\in \mathscr O^*$ we retrieve the set $\mathscr R^\prime$ of restricted areas   from the database and then compute $s$, if $o$  cannot be pruned/validated by Lemma \ref{Lemma:the first prune}. Particularly, we  further retrieve the set $\mathscr R^*$ of restricted areas   from the database and then compute $u$,  if $o$ cannot still be  pruned by Corollary \ref{Lemma:the first prune copy}   (c.f., Algorithm 1). Note that there is an overlap between $\mathscr R^\prime$ and $\mathscr R^*$ (as $\mathscr R^\prime\subseteq \mathscr R^*$). This implies that in this case we  retrieve two times for the $|\mathscr R^\prime|$ restricted areas, which incurs the extra I/O cost. 
 With the similar observation we can also realize that for different candidate moving objects, their candidate restricted areas  may have an overlap.  This implies that previous methods retrieve multiple times  for those ``overlapped'' restricted areas, which also incurs the  extra I/O cost.

 To overcome the above limitations, we develop a novel strategy. The rationale behind this strategy is to track restricted areas that have been retrieved, avoiding to retrieve redundant data from the database. 
 Generally speaking, for each restricted area that has been retrieved from the database, we use the $<$key,value$>$ pair to store the ID and geometric data of restricted area in memory.   For brevity, we denote by  $D_m$ the data structure used to manage the set of $<$key,value$>$ pairs{\small  \footnote{\small Note that in our implementation, we employ the \texttt{map} container of C++ STL (standard template library).   }}.   
 Furthermore, we build another  R-tree, which is used to index restricted areas that have been retrieved. Here the leaf node does not store the detailed geometric data of restricted area, instead it only stores the ID and MBR of restricted area.  We denote by   $\mathscr I_r^\prime$ this R-tree for clearness.  With the help of  $D_m$ and $\mathscr I_r^\prime$, we can easily track the restricted areas that have been retrieved.  
 Specifically, we  do as  follows:
 \begin{itemize*}
 \item If we need to obtain $\mathscr R^\prime$ (or $\mathscr R^*$), we do not directly search on $\mathscr I_r$ and fetch restricted area data from the database. Instead, we first search on $\mathscr I_r^\prime$, getting a set, say $S_1$, of  IDs of restricted areas (note: these restricted area data can be obtained by accessing $D_m$ which is  stored in memory);  we then  search on $\mathscr I_r$, getting another set, say $S_2$, of IDs of restricted areas. Let $S_3$ be the set of IDs of restricted areas such that $S_3=S_2-S_1$. We here only need to fetch restricted area data (from the database) whose IDs are in $S_3$. The $|S_3|$ restricted areas fetched from the database and the $|S_1|$ ones obtained from $D_m$  constitute $\mathscr R^\prime$ (or $\mathscr R^*$). 
 \item If  restricted areas are fetched from the database, we immediately index these restricted areas using  $\mathscr I_r^\prime$, and add corresponding $<$key,value$>$ pairs into $D_m$. That is to say, we  update $\mathscr I_r^\prime$ and $D_m$ immediately once we fetched restricted area data from the database.     
 \end{itemize*}

 
 With the above concepts in mind, we can easily develop the improved algorithm for the explicit query.    
 First, we  search the set $\mathscr O^*$ of candidate moving objects on the index  ${\mathscr I}_{o}$ using  $R_b$ as the input. We then initialize $D_m$ and $\mathscr I_r^\prime$. Next, we process each object $o\in \mathscr O^*$. The  steps of processing each object $o$ are the similar as the ones in Algorithm 1, except that we need to make minor modifications on Lines 6 and 10 (here we use the strategy proposed in this section). Note that, the improved algorithm for implicit query is available by similar modifications. The pseudo codes of these two improved algorithms are immediate, and thus are omitted for saving space.     In the next section, we test the effectiveness and efficiency of the proposed algorithms, using extensive experiments under various experimental settings.

 
 


 

 
 
 

 
 
 
 
 



 \section{Experimental evaluation} \label{sec:6}
 
 \subsection{Experimental setup} \label{subsec:settings}
 Our experiments are based  on both real  and synthetic data sets, and the size of 2D space is fixed to   10000$\times$10000.  Two real data sets  called CA and LB{\small \footnote{\small The CA  is available in site: \url{http://www.cs.utah.edu/~lifeifei/SpatialDataset.htm},  and the LB  is available in site: \url{http://www.rtreeportal.org/} }}, are deployed. The data sets are the similar as the ones in \cite{reynoldcheng:querying,yunjungao:Continuous,yufeitao:indexing}. 
 The CA contains  104770  2D points, the LB contains  53145  2D rectangles. We let the CA   denote the (latest) recorded locations of moving objects, and the LB  denote the restricted areas. (Remark: this paper is not interested in querying the trajectories, and thus  does not use the trajectory data sets.) All data sets are normalized in order to fit the 10000$\times$10000 2D space.  Synthetic data sets also consist of two types of data. We generate a set of polygons to denote the restricted areas, and  place  them   in  this  space uniformly.   We generate a set of points to denote the (latest)  recorded locations of moving objects, and let them randomly distributed in  this space (note:  there is a constraint that these points cannot be located in the interior of any restricted area).  Moreover,  we randomly generate different \textit{distant thresholds} (between 20 and 50) for different moving objects. For brevity, we use the  CL  and RU  to denote the real (\underline{C}alifornia points together with \underline{L}ong Beach rectangles) and synthetic (\underline{R}andom distributed points together with \underline{U}niform distributed polygons) data sets, respectively.

 \begin{table}[t]
 \begin{center}
 \vspace{0ex}
 \begin{tabular}{p{.17\textwidth} p{.28\textwidth} p{.38\textwidth}  } \hline 
 
 {{\scriptsize Parameter}}&{{\scriptsize Description}}& {{\scriptsize Value}} \\ \hline

 {\scriptsize $N$}  & {\scriptsize number of moving objects}			&{\scriptsize \textbf{[}$10k,20k,30k,40k,\textbf{50k}$\textbf{]}} \\
 
 {\scriptsize $M$}  &{\scriptsize number of restricted areas}	&{\scriptsize \textbf{[}$10k,20k,30k,40k,\textbf{50k}$\textbf{]}} \\ 
 
 {\scriptsize $\zeta$}  &{\scriptsize number of edges of each $r$}	&{\scriptsize \textbf{[}$\textbf{4},8,16,32,64$\textbf{]}} \\

 {\scriptsize $\psi$}  &{\scriptsize number of edges of $R$}	&{\scriptsize \textbf{[}$\textbf{4},8,16,32,64$\textbf{]}}\\

 {\scriptsize $\epsilon$}  &{\scriptsize size of $R$}	&{\scriptsize \textbf{[}$100,200,300,400,\textbf{500}$\textbf{]}} \\ 
 
 {\scriptsize $p_t$}  &{\scriptsize probabilistic threshold}	&{\scriptsize \textbf{[}$0.1,0.3,0.5,\textbf{0.7},0.9$\textbf{]}} \\

 {\scriptsize $\eta$}  & {\scriptsize shape of $R$}	&{\scriptsize \textbf{[} \textbf{Sq},Ta,Dm,Tz,Cc \textbf{]}}\\ 
 
 
 {\scriptsize $N_1$}  &{\scriptsize  number of pre-set  points}	&{\scriptsize \textbf{[} 700 \textbf{]}}\\
 
 {\scriptsize $\theta$}  &{\scriptsize number of versions}	&{\footnotesize \textbf{[} 7 \textbf{]}}\\
 \hline
 \end{tabular}
 \end{center}
 \vspace{-2ex}
 \caption{\small Parameters Used in Our Experiments}\label{tab:experiment_parameters}
 \end{table}

 The performance metrics  include the preprocessing time, update time, I/O time and query time. Specifically,  the query time is the sum of I/O and CPU time.  The update time is the sum of the time  for updating the database record (i.e., $l_r$) and the one for updating the index ${\mathscr I}_{o}$, when an object reports its new location to the database server (note: we here do not consider the network transfer time). In order to investigate the update time, we randomly update  100 location records, and run 10 times for each  test, and then compute the average value for estimating a single location update. 
 To estimate the average  I/O and query time of a single query,  we  randomly generate  50 query ranges, and run 10 times for each  query range, and then compute the average value.
 Also, we run 10 times and compute the average value for estimating the preprocessing time.

 \begin{table}[h]
 \begin{center}
 
 \vspace{0ex}
 
 \begin{tabular}{p{.13\textwidth}  p{.79\textwidth} }\hline 

 {{\scriptsize Shape}}& {{\scriptsize Value}} \\ \hline 
 
 {\scriptsize Ta}  			&{\scriptsize \textbf{[}$(x,y),(x+L,y),(x+L/2,y+L)$\textbf{]}} \\
 
 {\scriptsize Tz}  	&{\scriptsize \textbf{[}$(x,y),(x+L,y),(x+2L/3,y+L),(x+L/3,y+L)$\textbf{]}} \\
 
 {\scriptsize Dm}  &{\scriptsize \textbf{[}$(x+L/2,y),(x+2L/3,y+L/3),(x+L,y+L/2),(x+2L/3,y+2L/3),(x+L/2,y+L),(x+L/3,y+2L/3),(x,y+L/2),(x+L/3,y+L/3)$\textbf{]}} \\
 
 {\scriptsize Cc} 	&{\scriptsize \textbf{[}$(x+L/3,y),(x+2L/3,y),(x+2L/3,y+L/3),(x+L,y+L/3),(x+L,y+2L/3),(x+2L/3,y+2L/3),(x+2L/3,y+L),(x+L/3,y+L),(x+L/3,y+2L/3),(x,y+2L/3),(x,y+L/3),(x+L/3,y+L/3)$\textbf{]}} \\ \hline
 \end{tabular}
 \end{center}
 \vspace{-2ex}
 \caption{\small Use Cases of $\eta$ }\label{tab:use case of Rshape}
 \end{table}

 Our experiments are conducted on a computer with 2.16GHz dual core CPU and 1.86GB of memory. The page size is fixed to 4K. The maximum number of children nodes  in the R-tree  ${\mathscr I}_{o}$ (${\mathscr I}_{r}$) is fixed to 50. The (latest) recorded locations of moving objects and the restricted areas    are stored using the MYSQL Spatial Extensions{\small \footnote{\small More information can be obtained  in site: \url{http://dev.mysql.com/doc/refman/5.1/en/spatial-extensions.html} }}. (Henceforth, we call them location records and restricted area records, respectively.) Other parameters  are listed in Table \ref{tab:experiment_parameters},  in which the numbers in \textbf{bold} denote the default settings.    $N$, $M$ and $\zeta$ are  the  settings of  synthetic data sets. The default setting of each restricted area $r$ is a rectangle with $40\times 10$ size. Sq, Ta, Dm, Tz and Cc denote  \underline{s}\underline{q}uare, \underline{t}ri\underline{a}ngle,  \underline{d}ia\underline{m}ond, \underline{t}rape\underline{z}oid and \underline{c}ross\underline{c}riss, respectively.  The specific settings of these geometries  are listed in Table \ref{tab:use case of Rshape}. These geometries are all bounded by the $500\times 500$ rectangular box (i.e., MBR).   $L$ in Table \ref{tab:use case of Rshape} is 500, and $(x,y)$ are   the coordinates of left-bottom point of its MBR, which are generated randomly. We use two types of PDFs:   \underline{u}niform \underline{d}istribution and \underline{d}istorted \underline{G}aussian. We use the UD and DG to denote them, respectively. In our experiments, the standard deviation  is set to $\frac{\tau}{5}$ (note: $\tau$ is the distance threshold), and the mean $u_x$ and $u_y$ are set to the coordinates  of the recorded location $l_r$. Following the guidance of \cite{ZhijieWang:prqumo}, we choose 700 as  the number of pre-set points. In addition, we use 7 coarse versions for the multi-step computation,  corresponding workload errors (WEs) are listed in Table \ref{tab:workload error}, these data are obtained by the off-line test. All workload errors refer to the {absolute workload errors}. More specifically,   $CV_7$ is the average (absolute) workload error, other versions are the \textit{maximum (absolute)  workload errors}. We remark that although  $\theta=7$ is not mandatory,   a too  small value weakens the efficiency of the multi-step mechanism, and a too large value  incurs not only over-tedious tests, but also negligible pruning/validating power between two consecutive versions.

 \begin{table}[h]
 \begin{center}
 \vspace{0ex}
 \begin{tabular}{ p{.1\textwidth}p{.06\textwidth}p{.06\textwidth}p{.06\textwidth}p{.06\textwidth}p{.06\textwidth}p{.06\textwidth}p{.06\textwidth}     }\hline 
 {{\scriptsize Property}}& {{\scriptsize $CV_1$}}& {{\scriptsize $CV_2$}}& {{\scriptsize $CV_3$}}& {{\scriptsize $CV_4$}}& {{\scriptsize $CV_5$}}& {{\scriptsize $CV_6$}}& {{\scriptsize $CV_7$}} \\ \hline 
 
 {\scriptsize $\lfloor \frac{k\cdot N_1 }{\theta} \rfloor $} 	&{\scriptsize 100 } 		&{\scriptsize 200 }	&{\scriptsize 300 }	&{\scriptsize 400 }	&{\scriptsize 500 }	&{\scriptsize 600 }	&{\scriptsize 700 }\\
 
 {\scriptsize WE}  	&{\scriptsize 0.3607} &{\scriptsize 0.2499}&{\scriptsize 0.2131}&{\scriptsize 0.1921}&{\scriptsize 0.1504}&{\scriptsize 0.1067}&{\scriptsize 0.0095} \\ \hline
 \end{tabular}
 \end{center}
 \vspace{-2ex}
 \caption{\small Multiple Version Workload Errors}\label{tab:workload error}
 \end{table}

 \subsection{Performance study }\label{subsec:results for eprq}
 As this paper is the first attempt to the CSPTRQ, the competitors are unavailable. 
 We implemented  the baseline method{\small \footnote{\small Note that, the efficiency of the baseline method for the explicit and implicit queries  are   identical; for ease of presentation, we  here do not differentiate them.}} (Section \ref{sec:preliminaries}),   the proposed methods  for the explicit (Section \ref{sec:eprq}) and implicit (Section \ref{sec:iprq}) queries, respectively. For brevity, we use the B,  PE and PI to denote the \underline{b}aseline method,  the \underline{p}roposed method for the \underline{e}xplicit query, and the \underline{p}roposed method for the \underline{i}mplicit query, respectively.  Note that  we  present the results for the explicit and implicit queries in a mixed manner, in order to save space.    We first  investigate the impact of  parameters $\psi$, $p_t$ and $\eta$ on the performance based on both real and synthetic data sets, and then study the impact of  parameters $N$, $M$, $\epsilon$,  $\zeta$ on the performance  based on synthetic data sets. Finally, we investigate the effectiveness of the optimization strategy proposed in  Section \ref{sec:further optimziation}.

 \begin{figure}[b]
   \centering
   \subfigure[\scriptsize RU  ]{\label{fig:exp:4e}
       \includegraphics[width=30ex,height=18.2ex]{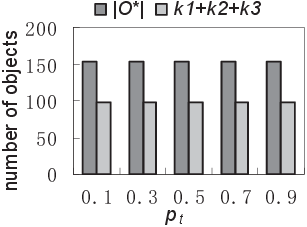}}
    \subfigure[\scriptsize CL ]{\label{fig:exp:4f}
        \includegraphics[width=30ex,height=18.2ex]{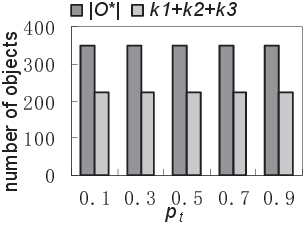}}
             \vspace{-2ex} 
  \caption{\small $|\mathscr O^*|$ and $k_1+k_2+k_3$ vs. $p_t$} 
  \label{fig:exp:4eandf}
 \end{figure}

 \begin{figure}[b]
  \centering
    \subfigure[\scriptsize ]{\label{fig:exp:1a}
        \includegraphics[width=11.4ex,height=18.2ex]{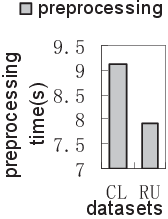}}
    \subfigure[\scriptsize ]{\label{fig:exp:1b}
        \includegraphics[width=10.1ex,height=18.2ex]{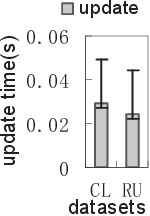}}
   \subfigure[\scriptsize ]{\label{fig:exp:1c}
       \includegraphics[width=32ex,height=18.2ex]{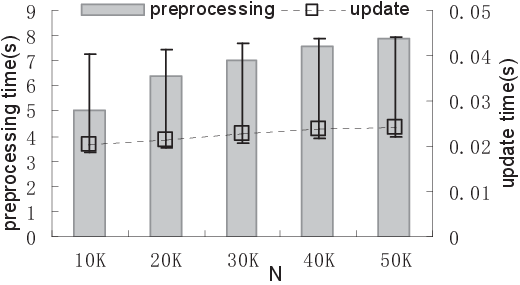}}\\
    \subfigure[\scriptsize ]{\label{fig:exp:1d}
        \includegraphics[width=32ex,height=18.2ex]{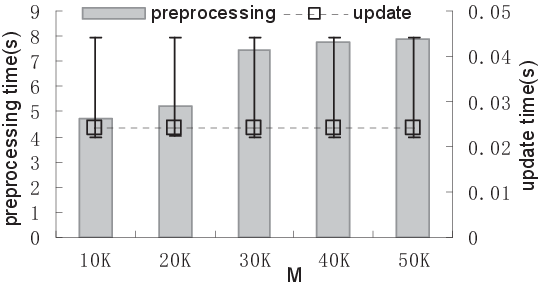}}
    \subfigure[\scriptsize ]{\label{fig:exp:1e}
        \includegraphics[width=32ex,height=18.2ex]{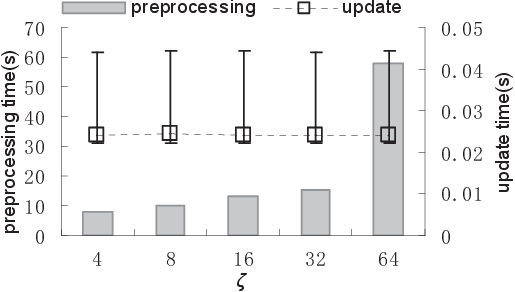}}
 
 \vspace{-2ex}
 \caption{\small Preprocessing and Update Performance}
 \end{figure}

 Thus far, all the experiments are based on both real and synthetic data sets. 
 For the two data sets, the  preprocessing time and update time are illustrated in Figure \ref{fig:exp:1a} and \ref{fig:exp:1b}, respectively. The preprocessing process is very fast, it only takes several seconds. (Note: recall Figure \ref{fig:1new:d}, the time is the \textit{hour} level if we pre-compute a set of uncertainty regions). Also, the update time is very short,  it only takes about tens of milliseconds.   In the sequel, we study the impact of  $N$,  $M$, $\epsilon$ and $\zeta$ on the performance, based on synthetic data sets.

 \begin{figure}[t]
   \centering
   \subfigure[\scriptsize RU  (UD)]{\label{fig:exp:6a}
       \includegraphics[width=31ex,height=19ex]{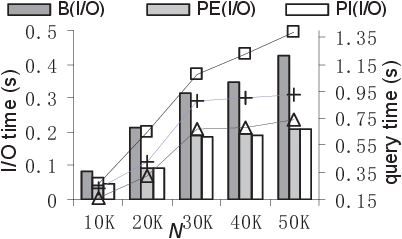}}
    \subfigure[\scriptsize RU  (DG)]{\label{fig:exp:6b}
        \includegraphics[width=31ex,height=19ex]{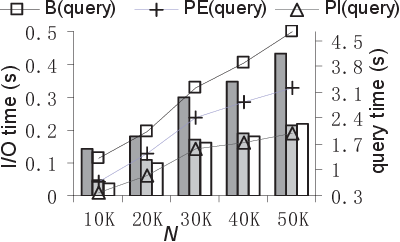}}
             \vspace{-2ex} 
  \caption{\small Query and I/O Efficiency vs. $N$} 
  \label{fig:exp:6}
 \end{figure}

 \noindent \textbf{Effect of $N$. }
 Figure \ref{fig:exp:1c} and Figure \ref{fig:exp:6}   illustrate the experimental results by varying  $N$ (the number of moving objects) from $1e+4$ to $5e+4$. From these figures, we can see that  the preprocessing time, update time, query time and I/O time increase as  $N$ increases.  In terms of the query  and I/O time,   the proposed methods always outperform the B, and   the (time) growth rate of the B is significantly faster than the ones of the proposed methods as  $N$ increases (especially when $N>3e+4$). This demonstrates that the proposed methods have better scalability. 
 
 \begin{figure}[b]
   \centering
   \subfigure[\scriptsize RU  (UD)]{\label{fig:exp:7a}
       \includegraphics[width=31ex,height=19ex]{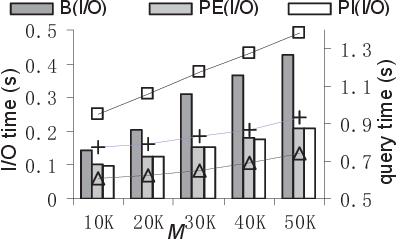}}
    \subfigure[\scriptsize RU  (DG)]{\label{fig:exp:7b}
        \includegraphics[width=31ex,height=19ex]{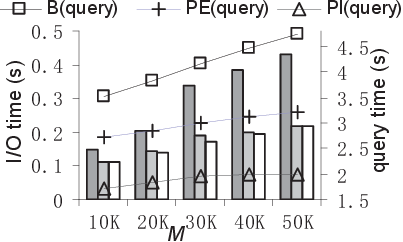}}
             \vspace{-2ex} 
  \caption{\small Query and I/O Efficiency vs. $M$} 
  \label{fig:exp:7}
 \end{figure}

 \noindent \textbf{Effect of $M$. }
 Figure \ref{fig:exp:1d} and  Figure \ref{fig:exp:7} illustrate the  results by varying  $M$ (the number of restricted areas) from $1e+4$ to $5e+4$. We can see from Figure \ref{fig:exp:1d} that  the preprocessing time increases as  $M$ increases, whereas the update time is constant as  $M$ increases.  This is because the preprocessing process needs to construct  ${\mathscr I}_{r}$ (the index of restricted areas); the update process however, is irrelevant with ${\mathscr I}_{r}$. In addition, Figure \ref{fig:exp:7} shows that both the query  and I/O time slightly increase as  $M$ increases, and the proposed methods always outperform the B. Similar to the last set of experiments, in terms of the query and I/O time,  the growth rate of the B is significantly faster than the proposed methods as  $M$ increases. This further demonstrates that the proposed methods have better scalability.

 Up to now, we have reported the main experimental results related to the baseline method and proposed methods. We are now ready to investigate the effectiveness of the optimization strategy proposed in Section \ref{sec:further optimziation}. With regard to explicit and implicit queries,  we  use respectively the PE$+$O and PI$+$O  to denote the  algorithms integrated the \underline{o}ptimization strategy presented in Section \ref{sec:further optimziation}, for ease of discussion. 
 
 \noindent \textbf{Effectiveness of optimization strategy. }
 Figure \ref{fig:exp:1star1} reports the results when explicit queries are executed. From this figure we can easily see that the I/O time of PE$+$O is obviously less than the one of PE, i.e., the improvement factor{\small \footnote{\small Here the improvement factor refers to the ratio of   time.  Assume that the I/O time of PE is 0.8736 seconds and the one of PI$+$O is 0.274 seconds  
 for example, the improvement factor is  $\frac{0.8736}{0.0.274}=3.189$.    }} is relatively large. This demonstrates that the strategy proposed in Section \ref{sec:further optimziation} is effective. Note that the query time of PE$+$O is also less than the one of PE (although the improvement factor is not as much as the one for I/O time).   Figure \ref{fig:exp:1star2} reports the results when implicit queries are executed, from which we can derive  similar findings. We remark that  when we vary other parameters (e.g., $\xi$, $N$, $M$) instead of $\zeta$, the experimental results also support our findings, i.e., the PE$+$O (PI$+$O) outperforms the    PE (PI), and the improvement factor for I/O time is relatively large.  To save space, we here do not plot those  results. 
  
  \begin{figure}[t]
    \centering
    \subfigure[\scriptsize RU dataset (UD)]{\label{fig:exp:1star1}
        \includegraphics[width=31ex,height=19ex]{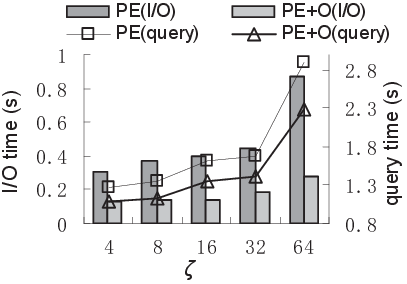}}
     \subfigure[\scriptsize RU dataset (UD)]{\label{fig:exp:1star2}
         \includegraphics[width=31ex,height=19ex]{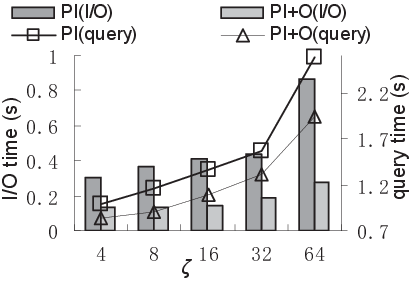}}
              \vspace{-2ex} 
   \caption{\small The effectiveness of optimization} 
   \label{fig:exp:1star1and2}
  \end{figure} 
  
 In addition to testing the total I/O time, we also investigate the I/O time for retrieving restricted areas and  moving objects, respectively. Figure \ref{fig:exp:2star1and2} reports the results when the default settings are used. We can easily see that in terms of PE,  most of I/O time are spent on  retrieving restricted area data  from the database. In contrast, the PE$+$O takes less time to retrieve restricted areas, as the optimization strategy discussed in Section \ref{sec:further optimziation}  avoids to retrieve redundant restricted area data from the database.    Another interesting finding is that when the CL data sets are used, the effectiveness of optimization strategy is more obvious. This is because the points (i.e., recorded locations of moving objects) are clustered in the CL data sets, rendering that different candidate moving objects easily share the same  restricted areas. We remark that the I/O time of implicit query is the same as the one of explicit query, omitted for saving space.

 \begin{figure}[h]
   \centering
   \subfigure[\scriptsize RU  ]{\label{fig:exp:2star1}
       \includegraphics[width=30ex,height=18.2ex]{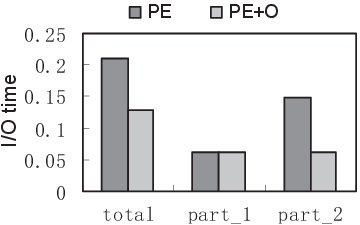}}
    \subfigure[\scriptsize CL ]{\label{fig:exp:2star2}
        \includegraphics[width=30ex,height=18.2ex]{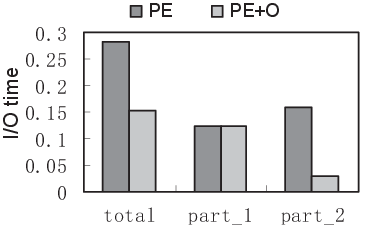}}
             \vspace{-2ex} 
  \caption{\small Total I/O time and patial I/O time. In these figures, the term ``part$\_$1''  denotes the I/O cost for  retrieving moving objects, and the term ``part$\_$2'' denotes the I/O cost for retrieving restricted areas.} 
  \label{fig:exp:2star1and2}
 \end{figure} 
 
 \noindent \textbf{Summary. }
 On the whole,  these experimental results show us that (\romannumeral 1) the proposed algorithms  obviously outperform the baseline method regardless of the I/O or query performance; (\romannumeral 2) the proposed algorithms have  better scalability, compared to the baseline method; (\romannumeral 3) while the I/O performance of two proposed algorithms is  identical,  they have different query performance (it is consistent with our theoretical analysis); (\romannumeral 4) the preprocessing process is fast and the update efficiency is high; (\romannumeral 5) the optimization strategy (discussed in Section \ref{sec:further optimziation}) can significantly improve the I/O efficiency, and also reduce the query time although the improvement factor is not very large.    
 Furthermore, these experimental results also demonstrate the  robustness and flexibility of our methods.

 
 



 \section{Concluding remarks} \label{sec:7}
 In this paper, we discussed the CSPTRQ for moving objects. We  differentiated two forms of CSPTRQs: explicit and implicit ones (as they can have  different  solutions, performance results, and purposes/applications). We showed the challenges, and proposed efficient solutions that are easy-to-understand and also easy-to-implement. Interestingly, the initial idea of our solutions is inspired by a casual trifle --- shopping in a supermarket. In brief, to answer the explicit query, we incorporated two main ideas: \underline{s}wapping the order of geometric operations;  and computing the probability using a \underline{m}ulti-step mechanism. We then extended these ideas to answer the implicit query, in which an \underline{e}nhanced multi-step mechanism is naturally developed. Furthermore,  we developed a novel strategy used to retrieving restricted areas in a more efficient manner. While the rationales behind our solutions are  simple, extensive experimental results  demonstrated  the effectiveness and efficiency of the proposed algorithms. Meanwhile, from the experiential results, we further perceived the difference between explicit and implicit queries; this interesting  finding is  meaningful for the future research. In the future, we prepare to study other  types of probabilistic threshold queries (e.g., \textit{concurrent queries}, \textit{kNN queries}) while considering the existence of restricted areas (i.e., obstacles).  Another interesting  research topic is to extend the concept of restricted areas  to  other uncertainty models.


{ \normalsize
\bibliographystyle{abbrv}
\bibliography{sample}

\begin{thebibliography}{10}

\bibitem{par:highprecision}
P.-A. Albinsson and S.~Zhai.
\newblock High precision touch screen interaction.
\newblock In {\em International Conference on Human Factors in Computing
  Systems (CHI)}, pages 105--112. 2003.

\bibitem{SurajitChaudhuri:probabilisticRanking}
S.~Chaudhuri, G.~Das, V.~Hristidis, and G.~Weikum.
\newblock Probabilistic ranking of database query results.
\newblock In {\em International Conference on Very Large Data Bases (VLDB)},
  pages 888--899. 2004.

\bibitem{MuhammadAamirCheema:multi}
M.~A. Cheema, L.~Brankovic, X.~Lin, W.~Zhang, and W.~Wang.
\newblock Multi-guarded safe zone: An effective technique to monitor moving
  circular range queries.
\newblock In {\em IEEE International Conference on Data Engineering (ICDE)},
  pages 189--200. 2010.

\bibitem{jinchuanchen:efficient}
J.~Chen and R.~Cheng.
\newblock Efficient evaluation of imprecise location dependent queries.
\newblock In {\em IEEE International Conference on Data Engineering (ICDE)},
  pages 586--595. 2007.

\bibitem{reynoldcheng:evaluatingPTKNNQ}
R.~Cheng, L.~Chen, J.~Chen, and X.~Xie.
\newblock Evaluating probability threshold k-nearest-neighbor queries over
  uncertain data.
\newblock In {\em International Conference on Extending Database Technology
  (EDBT)}, pages 672--683. 2009.

\bibitem{reynoldcheng:querying}
R.~Cheng, D.~V. Kalashnikov, and S.~Prabhakar.
\newblock Querying imprecise data in moving object environments.
\newblock {\em IEEE Transactions on Knowledge and Data Engineering (TKDE)},
  16(9):1112--1127, 2004.

\bibitem{reynoldcheng:efficient}
R.~Cheng, Y.~Xia, S.~Prabhakar, R.~Shah, and J.~S. Vitter.
\newblock Efficient indexing methods for probabilistic threshold queries over
  uncertain data.
\newblock In {\em International Conference on Very Large Data Bases (VLDB)},
  pages 876--887. 2004.

\bibitem{brucesechung:processing}
B.~S.~E. Chung, W.-C. Lee, and A.~L.~P. Chen.
\newblock Processing probabilistic spatio-temporal range queries over moving
  objects with uncertainty.
\newblock In {\em International Conference on Extending Database Technology
  (EDBT)}, pages 60--71. 2009.

\bibitem{BinCui:IMPACT}
B.~Cui, D.~Lin, and K.-L. Tan.
\newblock Impact: A twin-index framework for efficient moving object query
  processing.
\newblock {\em Data and Knowledge Engineering (DKE)}, 59(1):63--85, 2006.

\bibitem{ArneLDuwaer:dataprocessing}
A.~L. Duwaer.
\newblock Data processing system with a touch screen and a digitizing tablet,
  both integrated in an input device.
\newblock {\em US Patent}, 5231381, 1993.

\bibitem{yunjungao:Continuous}
Y.~Gao and B.~Zheng.
\newblock Continuous obstructed nearest neighbor queries in spatial databases.
\newblock In {\em ACM International Conference on Management of Data (SIGMOD)},
  pages 577--589. 2009.

\bibitem{bugragedik:processing}
B.~Gedik, K.-L. Wu, P.~S. Yu, and L.~Liu.
\newblock Processing moving queries over moving objects using motion-adaptive
  indexes.
\newblock {\em IEEE Transactions on Knowledge and Data Engineering (TKDE)},
  18(5):651--668, 2006.

\bibitem{Martin:Mobile}
M.~O. Hofmann, A.~McGovern, and K.~R. Whitebread.
\newblock Mobile agents on the digital battlefield.
\newblock In {\em Agents}, pages 219--225. 1998.

\bibitem{HaiboHu:AGeneric}
H.~Hu, J.~Xu, and D.~L. Lee.
\newblock A generic framework for monitoring continuous spatial queries over
  moving objects.
\newblock In {\em ACM International Conference on Management of Data (SIGMOD)},
  pages 479--490. 2005.

\bibitem{Minghua:rankingQueries}
M.~Hua, J.~Pei, W.~Zhang, and X.~Lin.
\newblock Ranking queries on uncertain data: a probabilistic threshold
  approach.
\newblock In {\em ACM International Conference on Management of Data (SIGMOD)},
  pages 673--686. 2008.

\bibitem{Kenneth:Evolutionarycomputation}
K.~A.~D. Jong.
\newblock {\em Evolutionary computation: a unified approach}.
\newblock MIT Press, Cambridge MA, 2006.

\bibitem{BartKuijpers:Trajectory}
B.~Kuijpers and W.~Othman.
\newblock Trajectory databases: Data models, uncertainty and complete query
  languages.
\newblock {\em Journal of Computer and System Sciences (JCSS)}, 76(7):538--560,
  2010.

\bibitem{MohamedFMokbel:SOLE}
M.~F. Mokbel and W.~G. Aref.
\newblock Sole: scalable on-line execution of continuous queries on
  spatio-temporal data streams.
\newblock {\em The VLDB Journal (VLDB J.)}, 17(5):971--995, 2008.

\bibitem{MohamedFMokbel:SINA}
M.~F. Mokbel, X.~Xiong, and W.~G. Aref.
\newblock Sina: Scalable incremental processing of continuous queries in
  spatio-temporal databases.
\newblock In {\em ACM International Conference on Management of Data (SIGMOD)},
  pages 623--634. 2004.

\bibitem{HodaMokhtar:On}
H.~Mokhtar, J.~Su, and O.~H. Ibarra.
\newblock On moving object queries.
\newblock In {\em International Symposium on Principles of Database Systems
  (PODS)}, pages 188--198. 2002.

\bibitem{Robin:DisasterRobotics}
R.~R. Murphy.
\newblock {\em Disaster Robotics}.
\newblock The MIT Press, Cambridge, 2014.

\bibitem{DanOlteanu:ranking}
D.~Olteanu and H.~Wen.
\newblock Ranking query answers in probabilistic databases: Complexity and
  efficient algorithms.
\newblock In {\em IEEE International Conference on Data Engineering (ICDE)},
  pages 282--293. 2012.

\bibitem{dieterpfoser:capturing}
D.~Pfoser and C.~S. Jensen.
\newblock Capturing the uncertainty of moving-object representations.
\newblock In {\em International Symposium on Advances in Spatial Databases
  (SSD)}, pages 111--132. 1999.

\bibitem{SunilPrabhakar:Query}
S.~Prabhakar, Y.~Xia, D.~V. Kalashnikov, W.~G. Aref, and S.~E. Hambrusch.
\newblock Query indexing and velocity constrained indexing: Scalable techniques
  for continuous queries on moving objects.
\newblock {\em IEEE Transaction on Computers (TC)}, 51(10):1124--1140, 2002.

\bibitem{yinianqi:thresholdquery}
Y.~Qi, R.~Jain, S.~Singh, and S.~Prabhakar.
\newblock Threshold query optimization for uncertain data.
\newblock In {\em ACM International Conference on Management of Data (SIGMOD)},
  pages 315--326. 2010.

\bibitem{DariusSidlauskas:Parallel}
D.~Sidlauskas, S.~Saltenis, and C.~S. Jensen.
\newblock Parallel main-memory indexing for moving-object query and update
  workloads.
\newblock In {\em ACM International Conference on Management of Data (SIGMOD)},
  pages 37--48. 2012.

\bibitem{aprasadsistal:Modeling}
A.~P. Sistla, O.~Wolfson, S.~Chamberlain, and S.~Dao.
\newblock Modeling and querying moving objects.
\newblock In {\em IEEE International Conference on Data Engineering (ICDE)},
  pages 422--432. 1997.

\bibitem{aprasadsistal:querying}
A.~P. Sistla, O.~Wolfson, S.~Chamberlain, and S.~Dao.
\newblock Querying the uncertain position of moving objects.
\newblock In {\em Temporal Databases}, pages 310--337. 1997.

\bibitem{WeiweiSun:Mergedaggregate}
W.~Sun, C.~Chen, B.~Zheng, C.~Chen, L.~Zhu, W.~Liu, and Y.~Huang.
\newblock Merged aggregate nearest neighbor query processing in road networks.
\newblock In {\em ACM Conference on Information and Knowledge Management
  (CIKM)}, pages 2243--2248. 2013.

\bibitem{yufeitao:indexing}
Y.~Tao, R.~Cheng, X.~Xiao, W.~K. Ngai, B.~Kao, and S.~Prabhakar.
\newblock Indexing multi-dimensional uncertain data with arbitrary probability
  density functions.
\newblock In {\em International Conference on Very Large Data Bases (VLDB)},
  pages 922--933. 2005.

\bibitem{YufeiTao:The}
Y.~Tao, D.~Papadias, and J.~Sun.
\newblock The tpr*-tree: An optimized spatio-temporal access method for
  predictive queries.
\newblock In {\em International Conference on Very Large Data Bases (VLDB)},
  pages 790--801. 2003.

\bibitem{yufeitao:range}
Y.~Tao, X.~Xiao, and R.~Cheng.
\newblock Range search on multidimensional uncertain data.
\newblock {\em ACM Transactions on Database Systems (TODS)}, 32(3), 2007.

\bibitem{gocetrajcevski:probabilistic}
G.~Trajcevski.
\newblock Probabilistic range queries in moving objects databases with
  uncertainty.
\newblock In {\em International ACM Workshop on Data Engineering for Wireless
  and Mobile Access (MobiDE)}, pages 39--45. 2003.

\bibitem{gocetrajcevski:Uncertain}
G.~Trajcevski, A.~N. Choudhary, O.~Wolfson, L.~Ye, and G.~Li.
\newblock Uncertain range queries for necklaces.
\newblock In {\em International Conference on Mobile Data Management (MDM)},
  pages 199--208. 2010.

\bibitem{gocetrajcevski:managing}
G.~Trajcevski, O.~Wolfson, K.~Hinrichs, and S.~Chamberlain.
\newblock Managing uncertainty in moving objects databases.
\newblock {\em ACM Transactions on Database Systems (TODS)}, 29(3):463--507,
  2004.

\bibitem{HaojunWang:Processing}
H.~Wang and R.~Zimmermann.
\newblock Processing of continuous location-based range queries on moving
  objects in road networks.
\newblock {\em IEEE Transactions on Knowledge and Data Engineering (TKDE)},
  23(7):1065--1078, 2011.

\bibitem{ZhijieWang:prqumo}
Z.-J. Wang, D.-H. Wang, B.~Yao, and M.~Guo.
\newblock Probabilistic range query over uncertain moving objects in
  constrained two-dimensional space.
\newblock {\em IEEE Transactions on Knowledge and Data Engineering (TKDE)},
  27(3):866--879, 2015.

\bibitem{OuriWolfson:Updating}
O.~Wolfson, A.~P. Sistla, S.~Chamberlain, and Y.~Yesha.
\newblock Updating and querying databases that track mobile units.
\newblock {\em Distributed and Parallel Databases (DPD)}, 7(3):257--387, 1999.

\bibitem{OuriWolfson:moving}
O.~Wolfson, B.~Xu, S.~Chamberlain, and L.~Jiang.
\newblock Moving objects databases: Issues and solutions.
\newblock In {\em International Conference on Scientific and Statistical
  Database Management (SSDBM)}, pages 111--122. 1998.

\bibitem{KunLungWu:Incremental}
K.-L. Wu, S.-K. Chen, and P.~S. Yu.
\newblock Incremental processing of continual range queries over moving
  objects.
\newblock {\em IEEE Transactions on Knowledge and Data Engineering (TKDE)},
  18(11):1560--1575, 2006.

\bibitem{yeyuan:efficientlyAPTBSPQ}
Y.~Yuan, L.~Chen, and G.~Wang.
\newblock Efficiently answering probability threshold-based shortest path
  queries over uncertain graphs.
\newblock In {\em International Conference on Database Systems for Advanced
  Applications (DASFAA)}, pages 155--170. 2010.

\bibitem{meihuizhang:effectively}
M.~Zhang, S.~Chen, C.~S. Jensen, B.~C. Ooi, and Z.~Zhang.
\newblock Effectively indexing uncertain moving objects for predictive queries.
\newblock {\em Proceedings of the VLDB Endowment (PVLDB)}, 2(1):1198--1209,
  2009.

\bibitem{RuiZhang:Optimized}
R.~Zhang, H.~V. Jagadish, B.~T. Dai, and K.~Ramamohanarao.
\newblock Optimized algorithms for predictive range and knn queries on moving
  objects.
\newblock {\em Information Systems (IS)}, 35(8):911--932, 2010.

\bibitem{yingzhang:Efficient}
Y.~Zhang, X.~Lin, Y.~Tao, W.~Zhang, and H.~Wang.
\newblock Efficient computation of range aggregates against uncertain
  location-based queries.
\newblock {\em IEEE Transactions on Knowledge and Data Engineering (TKDE)},
  24(7):1244--1258, 2012.

\bibitem{KaiZheng:probabilistic}
K.~Zheng, G.~Trajcevski, X.~Zhou, and P.~Scheuermann.
\newblock Probabilistic range queries for uncertain trajectories on road
  networks.
\newblock In {\em International Conference on Extending Database Technology
  (EDBT)}, pages 283--294. 2011.

\end{thebibliography}
}


\end{document}